\documentclass[journal,letterpaper]{IEEEtran}
\usepackage[T1]{fontenc}
\usepackage{amsmath,amssymb,amsthm,amsfonts}
\usepackage{graphicx}
\usepackage{cite}
\usepackage{xcolor}
\usepackage{hyperref}
\hypersetup{colorlinks=true,linkcolor=blue,citecolor=blue,urlcolor=blue}

\newtheorem{theorem}{Theorem}
\newtheorem{proposition}{Proposition}
\newtheorem{lemma}{Lemma}
\newtheorem{corollary}{Corollary}
\newtheorem{definition}{Definition}

\newtheorem{remark}{Remark}

\newcommand{\R}{\mathbb{R}}

\newcommand{\Q}{\mathbb{Q}}
\newcommand{\dd}{\,\mathrm{d}}
\newcommand{\Sat}{\mathcal{S}}
\newcommand{\Gw}{\mathcal{G}}
\newcommand{\Ue}{\mathcal{U}}
\newcommand{\Nodes}{\mathcal{V}}
\newcommand{\Routes}{\mathcal{P}}
\newcommand{\Tab}{\Lambda}
\newcommand{\cQ}{\overline{\mathcal{Q}}}

\title{On the Lack of Periodicity\\ of Walker Satellite Constellation Routing Tables}

\author{Chang-Sik~Choi and Fran\c{c}ois~Baccelli%
\thanks{Chang-Sik Choi is with the School of Electrical Engineering, KAIST,
South Korea (email: changsik@kaist.ac.kr). Fran\c{c}ois Baccelli is with
Telecom Paris and Inria Paris, France (email: francois.baccelli@ens.fr).}}

\begin{document}
\maketitle

\begin{abstract}
In a referential that rotates with Earth, the dynamics of the configurations of satellites in a Delta Walker constellation can be analyzed as a dynamical system as a function of a translation on the torus.
This paper extends this type of analysis to routing over a Walker constellation, leveraging fixed ground relays. It considers any deterministic, time-invariant routing rule on such a collection of satellites and relays. The main object of interest is the routing table associated with this rule, which specifies, for all source-destination pairs, a route made of satellites and relays between them, for instance the shortest, together with angular information on next hop at each step of the route, which is essential for beamforming in this context. The main result is that, for each such routing rule, the routing table inherits the dichotomy of the torus flow: when the ratio of the Earth-spin and satellite angular speeds is rational, the routing table process is periodic with the same period as the constellation. When it is irrational, the table is non periodic but admits ergodic long-run averages which can be computed as ensemble averages w.r.t. an explicit invariant measure. Simulations with mixed satellite--gateway greedy routing illustrate these findings in terms of both temporal and spectral properties of the Walker routing table time series. The paper also discusses how to define metrics that cope with such non periodic fluctuations when present. This is illustrated by a comparison of the round-trip times between two far away cities as offered by a Walker constellation using greedy routing on one side, and by the currently available terrestrial fiber on the other side. It is shown how ergodicity can be used to  make this comparison between the time  varying instantaneous round trip times of the constellation and the constant round trip time of the fiber network meaningful.
\end{abstract}

\begin{IEEEkeywords}
Walker constellation, LEO satellite networks, routing, shortest path, beamforming angle, dynamical systems,
torus translations, ergodicity, stochastic geometry.
\end{IEEEkeywords}

\section{Introduction}

\IEEEPARstart{L}{ow} Earth Orbit (LEO) satellite networks provide global
connectivity through constellations whose geometry follows the Walker
design: orbital planes evenly spaced over the equator and satellites
evenly spaced along each orbit \cite{walker1984satellite}.
This paper concerns the temporal structure of \emph{routing} over such a
constellation. It determines when routing processes are periodic
or not and studies their statistical properties, with an emphasis on long-term
averages.

\emph{Related work.} Three lines of research bear on this question.
Algorithmic work on LEO routing exploits the predictability of the
topology: snapshot (virtual-topology) schemes precompute routes over a
discrete, periodically reused sequence of connectivity graphs
\cite{werner1997}, virtual-node schemes hide satellite motion behind
fixed logical addresses \cite{ekici2001}, and recent system studies
optimize inter-satellite paths at Starlink scale \cite{handley2018}.
See \cite{alhraishawi2023} for a survey. These works design and
evaluate \emph{specific} rules, and their
discrete-time machinery rests on an \emph{assumed} exact periodicity of
the topology that is not itself examined. Stochastic-geometric works
model the constellation as a binomial or Poisson point process on the
sphere \cite{9079921} or as a Cox process capturing the clustering of
satellites along orbits \cite{10703111}. These models yield
tractable coverage and distance statistics but are most often isotropic and static,
and therefore cannot address the time evolution seen by static users on
the rotating Earth. Bridging the two viewpoints, the authors of the present paper
modeled in \cite{11159552} the Walker constellation in the Earth-fixed frame as
a deterministic factor of a linear flow $\{R_t\}$ on a two-dimensional
torus and established a dichotomy: if the ratio $\rho$ of the Earth-spin
and satellite angular speeds is rational, the constellation process is
periodic. If $\rho$ is irrational, the flow is not periodic. It is, however,
minimal and uniquely ergodic. This dichotomy concerns the
constellation itself and its implications for routing have not been investigated.

This paper closes this gap by extending this dichotomy to
\emph{routing tables} in this context. By routing table,
we mean instructions for data delivery from a source to a destination
in terms of a route through satellites and, possibly, fixed ground
relays (gateways) on Earth. The key observation is that any deterministic
routing rule that depends at any given time only on the constellation geometry
at this time defines an \emph{observable} of the flow $\{R_t\}$. This provides
a way to transfer the dichotomy to the routing process. The possibility
of such a transfer however depends on the nature of the observable.
The main mathematical result of this paper concerns the \emph{routing table}
of the constellation, that is the collection, for all times and for all
source-destination pairs, of the sequence of relay labels together with the
beamforming (look) angles toward the next relay and their time derivatives,
along the route from source to destination.
It is shown that this table is an observable which is injective outside a null
set of phases, so that the observed value a.s. determines the constellation phase.
This in turn implies that the dichotomy extends to this observable.
The main practical consequence is that if $\rho$ is irrational, then the routing table
is non-periodic.

The main \emph{contributions} of the paper are the following:
\begin{itemize}
\item We formalize \emph{general} mixed satellite--ground routing rules
as observables of the constellation flow, with shortest-path
routing---computed by tropical $(\min,+)$ matrix powers over the
line-of-sight (LOS) graph--- and greedy routing as the canonical admissible examples
(Sec.~\ref{sec:model}). 
\item We show how to use the dichotomy of the
flow for any given observable (Prop.~\ref{prop:transfer}). More precisely,
we show that for rational $\rho$, every routing process (and in
particular the routing table) is periodic, with a period dividing
the constellation period; for irrational $\rho$, 
the routing table is non periodic. However, \emph{unique}
ergodicity is leveraged to prove that long-term averages of the non
periodic entries of the routing table
converge to a computable limit and that this convergence is
valid for \emph{every} initial constellation phase. 
\item We prove an
injectivity lemma (Lem.~\ref{lem:obs}): except for a null set
of phases, a single routing-table entry associated with
a given ground node (relay label, look angles, angle rates) determines
the torus phase. This implies that, for every
deterministic and time-invariant rule, the routing table is periodic
and has \emph{exactly} the constellation period in the rational case
(Cor.~\ref{cor:exact}) and is non-periodic in the irrational case
(Thm.~\ref{thm:aper}). 
\item We complete the dynamical system analysis
with discrete event simulation and spectral analysis of the dynamics
of greedy routing over a Walker constellation. The aim is to numerically
illustrate the exact periodicity of the routing table at the constellation
period in the rational case, as well as the absence of periodicity and
the equality of time and ensemble averages in the irrational case
(Sec.~\ref{sec:num}).
\item In the ergodic case, we show how to use the long term averages
of round-trip propagation delays between two distant cities to
compare constellation and terrestrial delays (Sec.~\ref{sec:rtt}),
despite the fact that, in the constellation case, these instantaneous delays
vary significantly with time due to the combined satellite and Earth rotations.
\end{itemize}

\section{System Model}
\label{sec:model}

\begin{figure}
	\centering
	\includegraphics[width=.8\linewidth]{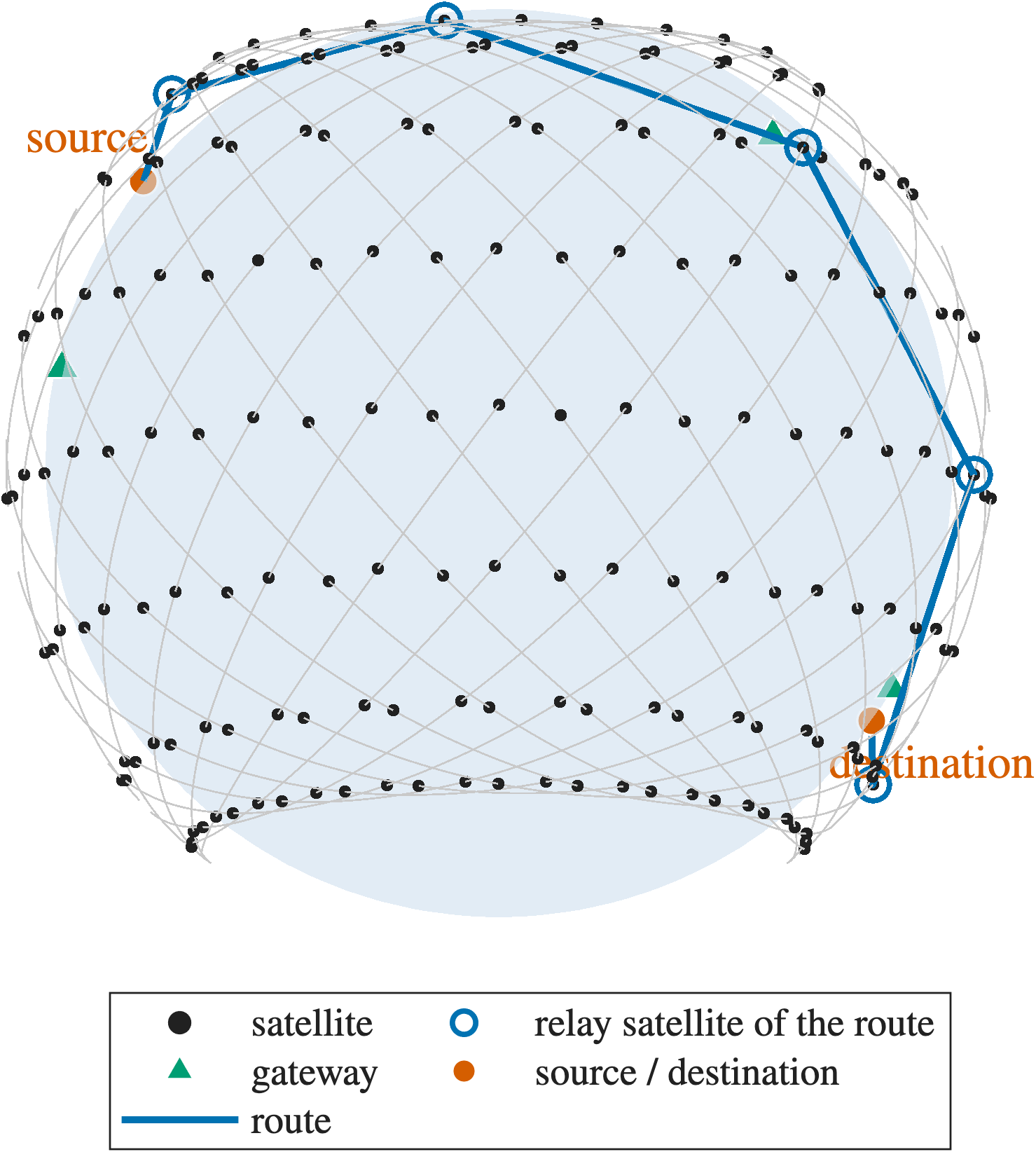}
	\caption{Snapshot at $t=0$ of the greedy mixed-relay route over
		the Walker constellation ($N_o=24$, $N_s=15$, $\varphi=53^\circ$,
		$r=6921$~km). Black dots are the satellites of the visible
		hemisphere, green triangles the gateways, and the open blue
		circles the relay satellites of the route from the source
		(Seoul) to the nearly antipodal destination (Buenos Aires).
		The camera is centered on the route, so every hop lies on the
		visible hemisphere. The central angle to the destination
		decreases at every hop, from $174.8^\circ$ at the source to
		zero, as the greedy rule requires.}
	\label{fig:route}
\end{figure}

\subsection{Constellation Flow}

As in \cite{11159552}, consider a Walker constellation with $N_o$
circular orbits of common inclination $\varphi\in(0,\pi/2]$ and radius
$r$, each carrying $N_s$ satellites, around the Earth of radius $e<r$. In
the Earth-fixed frame, the right ascension of the ascending node of orbit $i$ is
\begin{equation}
\Omega_i = \frac{2\pi i}{N_o}+\bar\theta_t
\label{eq:elements1}
\end{equation}
for $i=0,\ldots,N_o-1$,
and the on-orbit phase of satellite $j$ on this orbit is
\begin{equation}
u_{ij} = \frac{2\pi j}{N_s}+\bar\omega_t + c_{ij},
\label{eq:elements2}
\end{equation}
for $j=0,\ldots,N_s-1$. Here, $c_{ij}$ are the known constant Walker
phasing offsets. The state variables evolve linearly:
\begin{align}
\bar\theta_t &= \bar\theta - \bar v_\theta t \!\!\mod \tfrac{2\pi}{N_o},&
\bar\omega_t &= \bar\omega + \bar v_\omega t \!\!\mod \tfrac{2\pi}{N_s},
\end{align}
where $\bar v_\theta$ and $\bar v_\omega$ denote the sidereal rotation
rate of the Earth and the orbital angular speed of the satellites,
respectively. 
This defines the linear flow
$R_t(x)=(\bar\theta_t,\bar\omega_t)$ on the torus
$S=[0,\frac{2\pi}{N_o})\times[0,\frac{2\pi}{N_s})$ with initial condition
$x=(\bar\theta,\bar\omega)$ and speed ratio
$\rho=\bar v_\theta/\bar v_\omega$. Let $\cQ$ denote the uniform
probability measure on $S$.

Since routing instructions must name the satellite a given ground terminal points at, we work
with the \emph{labelled} configuration: let
\begin{equation}
\Xi^{\mathrm{lab}}:S\to (r\mathbb{S}^2)^{N_oN_s},\quad
\Xi^{\mathrm{lab}}(x)=\big(X_{(i,j)}(x)\big)_{i,j},
\label{eq:lab}
\end{equation}
where $X_{(i,j)}(x)$ is the position on the sphere $r\mathbb{S}^2$
determined by \eqref{eq:elements1} and \eqref{eq:elements2}. The labelled constellation process is
$\Psi^{\mathrm{lab}}_t=\Xi^{\mathrm{lab}}(R_t)$; forgetting labels
recovers the point process $\Psi_t$ of \cite{11159552}, so
$\Psi^{\mathrm{lab}}_t$ is a finer deterministic function of the flow.
Throughout, a label $(i,j)$ will be referred to as a \emph{lattice site} of the Walker
constellation: when $\bar\theta_t$ or
$\bar\omega_t$ wraps around the fundamental domain $S$, all labels are
permuted accordingly, and it is this convention that makes
$\Xi^{\mathrm{lab}}$ a well-defined map on $S$.

\begin{remark}[Extension to satellite marks]
\label{rem:marks}
Since a label names a lattice site, a satellite changes
its label when $\bar\theta_t$ or $\bar\omega_t$ wrap around the
fundamental domain. An alternative convention marks each satellite with
its label $(i,j)$ at time zero and keeps these marks forever
in the dynamics. Under such a marking, no relabeling ever occurs and the
routing table changes only at physical handovers. Note however that the marked 
configuration is not a function on $S$. The natural phase space to handle
this case is the full torus $\hat S=[0,2\pi)^2$, on which we track
the position of the satellite with mark $(0,0)$. The associated flow is again a
translation on this torus. It should be clear that once the position of this
satellite is known on $\hat S$,
the positions of all other satellites (with marks $(i,j)\ne(0,0)$)
are determined on $\hat S$ via a deterministic
and measurable function. Using this, it is easy to see that 
all results proved below hold for the marked case as well when taking as
basic flow the last translation on $\hat S$ and as invariant measure
the uniform measure on $\hat S$.
\end{remark}

\begin{remark}[Marking of ground stations]
Each Earth node is assumed to be equipped with a mark which remains fixed over time.
\end{remark}

\begin{lemma}[{\cite[Thm.~2]{11159552}}]
\label{lem:cb}
If $\rho\in\Q$, $\{R_t\}$ is periodic with some minimal period
$T_\star>0$. If $\rho\notin\Q$, $\{R_t\}$ is minimal and uniquely ergodic
with unique invariant probability measure $\cQ$.
\end{lemma}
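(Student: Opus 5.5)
The plan is to reduce the flow to a single rotation of a circle. Rescale coordinates on $S$ by setting $a=N_o\bar\theta/(2\pi)$ and $b=N_s\bar\omega/(2\pi)$. This identifies $S$ with the standard torus $\mathbb{T}^2=\R^2/\mathbb{Z}^2$ and carries $\cQ$ to Lebesgue (Haar) measure. In these coordinates $R_t$ becomes the linear translation
\begin{equation*}
(a,b)\mapsto (a-\alpha t,\; b+\beta t),
\end{equation*}
with $\alpha=N_o\bar v_\theta/(2\pi)$ and $\beta=N_s\bar v_\omega/(2\pi)$. The dynamics are then governed by the slope $\alpha/\beta=\rho N_o/N_s$. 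Since $N_o/N_s$ is rational, this slope is rational exactly when $\rho\in\Q$. All statements therefore reduce to classical facts about linear flows on $\mathbb{T}^2$.

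\emph{Rational case.} Write $-\alpha/\beta=p/q$ in lowest terms, with $q>0$. Take $T_\star=q/\beta$. At $t=T_\star$ the displacement is $(-\alpha T_\star,\beta T_\star)=(p,q)\in\mathbb{Z}^2$, so $R_{T_\star}=\mathrm{id}$ on every point. The set of periods is a closed subgroup of $\R$ that is not all of $\R$, since the flow is nontrivial because $\beta>0$. Hence it is of the form $T_\star'\mathbb{Z}$, and its positive generator is the minimal period. The usual gcd argument shows that this generator equals $q/\beta$.

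\emph{Irrational case.} Take the Poincaré section $\{b=0\}$. The flow returns to this section at times $t_n=n/\beta$, and the first-return map is the circle rotation $a\mapsto a-\alpha/\beta \bmod 1$. This rotation number is irrational. I would then prove unique ergodicity of the flow directly, using Weyl's criterion. For a character $e_k(a,b)=e^{2\pi i(k_1a+k_2b)}$ with $k\neq0$, the frequency $\lambda_k=-k_1\alpha+k_2\beta$ is nonzero, precisely because the slope is irrational. Therefore
\begin{equation*}
\frac1T\int_0^T e_k(R_tx)\dd t=e_k(x)\,\frac{e^{2\pi i\lambda_kT}-1}{2\pi i\lambda_kT}\longrightarrow 0,
\end{equation*}
uniformly in $x$. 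For $k=0$ the average is identically $1$. By Stone--Weierstrass, trigonometric polynomials are dense in $C(\mathbb{T}^2)$, so time averages of every continuous $f$ converge uniformly to $\int f\dd\cQ$.

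It follows that any invariant probability $\mu$ satisfies $\int f\dd\mu=\int f\dd\cQ$: integrate the time averages against $\mu$, use invariance, and pass to the limit. Thus $\cQ$ is the unique invariant measure. Minimality follows from this as well. Since $\cQ$ has full support, the uniform convergence of averages of a bump function supported in any open set $U$ forces every orbit to enter $U$, so every orbit is dense.

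The main obstacle is bookkeeping rather than depth. One must check that the wrap-around conventions modulo $2\pi/N_o$ and $2\pi/N_s$ really produce the standard torus, with the rescaled speeds as above. One must also check that the rationality condition concerns $\rho N_o/N_s$, which is equivalent to $\rho\in\Q$. Both facts follow directly from the rescaling described in the first paragraph.
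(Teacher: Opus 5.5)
Your proof is correct. The paper gives no argument for this lemma; it imports it from \cite[Thm.~2]{11159552}. Your proposal is therefore a self-contained replacement for a citation, not an alternative to a proof in the text.

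The rescaling to $\mathbb{R}^2/\mathbb{Z}^2$ with speeds $\alpha=N_o\bar v_\theta/(2\pi)$ and $\beta=N_s\bar v_\omega/(2\pi)$ is right. It gives you two things the paper uses but never derives.
\begin{itemize}
\item First, the explicit minimal period $T_\star=q/\beta$, where $q$ is the reduced denominator of $\rho N_o/N_s$. The ``gcd argument'' is one line: $t$ is a period iff $\beta t=n\in\mathbb{Z}$ and $pn/q\in\mathbb{Z}$, i.e.\ $q\mid n$. For the resonant design of Sec.~\ref{sec:num}, $\bar v_\omega=15\bar v_\theta$, you get $\alpha/\beta=N_o/(15N_s)$ and $\beta=15N_s/\mathrm{day}$. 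Hence $T_\star=\mathrm{day}/\gcd(N_o,15N_s)=\mathrm{day}/3$, which is exactly the formula the paper states there without justification.
\item Second, your Weyl-criterion route gives directly the uniform-in-$x$ convergence of time averages of continuous functions. This is precisely the property invoked in statement~(3) of Prop.~\ref{prop:transfer}. You obtain it by hand rather than deducing it from unique ergodicity through the general theorem in \cite{Katok}.
\end{itemize}

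Two minor points.
\begin{itemize}
\item The Poincar\'e-section paragraph is never used. It could give minimality via Jacobi's theorem on irrational rotations, but your bump-function argument already settles minimality, so you can drop it.
\item You should state explicitly that $\cQ$ is itself invariant (translation invariance of Haar measure). Your averaging argument only shows that every invariant probability must coincide with $\cQ$, not that one exists.
\end{itemize}
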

We recall that under
unique ergodicity, the convergence of time averages of
continuous functions holds for \emph{all} initial conditions
\cite{Katok}, which is stronger than ergodicity, where this convergence 
only holds almost surely. This is used in Prop.~\ref{prop:transfer} below.

\subsection{Ground Nodes and LOS Graph}

Let $\Gw=\{G_1,\ldots,G_K\}$ and $\Ue=\{U_1,\ldots,U_L\}$ be finite
sets of fixed points on the Earth sphere $e\mathbb{S}^2$, representing the positions of
the gateways, which we assume to be used as possible relays between satellites, and the user terminals, which
are not used as relays, respectively. Let $\Sat$ denote the $N_oN_s$ satellites. On the node set
$\Nodes=\Gw\,\cup\,\Ue\,\cup\,\Sat$, the LOS graph $\mathcal{C}_t$ at time $t$
has the following edges: no edge exists between two ground points; a ground point and a
satellite are connected by an edge iff the satellite lies in the visibility cap of
the ground point (an elevation mask $\varepsilon$ can be used in its place); two satellites have an edge
iff the segment joining them does not meet the Earth ball (something analogous to the $\varepsilon$ mask alluded
to above can be used too, should one want to avoid edges between satellites that are tangential to Earth). Edge
weights are Euclidean distances. Note that both the edge set and the
edge weights are functions of $\Psi^{\mathrm{lab}}_t$ alone. We assume $N_o,N_s$ are large enough and
the latitudes of $\Gw\cup\Ue$ are moderate enough to guarantee that $\mathcal{C}_t$ is
connected at all times.

\subsection{Routing Rules as Observables}

We start with a few basic dynamical system definitions.
Let $\mathcal{Z}$ be a Polish space. Let $\mathcal{O}:S\to\mathcal{Z}$ be a measurable map. 
The stochastic process $\mathcal{Z}_t=\mathcal{O}(R_t)$ is called a
$\mathcal{Z}$-valued \emph{observable} of the dynamical system $(R_t)$.
Given a flow $\{\psi_t\}$ on $\mathcal{Z}$, the observable $(\mathcal{Z}_t)$ is a \emph{factor} of $(R_t)$
when the process $\{\mathcal{Z}_t\}$ is itself a dynamical system driven by $\{\psi_t\}$
and $\mathcal{O}\circ R_t=\psi_t\circ\mathcal{O}$ for all $t$.
When a factor map is in addition injective, the two dynamical systems are
measurably conjugate \cite{Katok}. 

We now discuss routing in this context.
\begin{definition}[Mixed-relay routing rule]
Let $\Routes$ be the finite set of loop-free node sequences of $\Nodes$
whose interior nodes lie in $\Sat\cup\Gw$. A \emph{mixed-relay routing
rule} is a measurable map $\Gamma$ returning, for each labelled (or marked)
configuration and each source--destination pair $(v_s,v_e)\in {\cal V}^2$, a route
\begin{equation}
\gamma_t=\Gamma\big(\Psi^{\mathrm{lab}}_t,v_s,v_e\big)
=\{v_s,Z_1,\ldots,Z_n,v_e\},
\end{equation}
with $Z_i\in\Sat\cup\Gw$. The rule is \emph{time-invariant} if it depends
on $t$ only through $\Psi^{\mathrm{lab}}_t$, and \emph{deterministic} if
it uses no external randomization.
\end{definition}

Note that in the last definition, endpoints may be either ground stations or
satellites designated by their labels (or marks).

In the sequel, a mixed-relay routing rule is called \emph{admissible}
if it is a deterministic, measurable, and time-invariant function of the configuration.

For a ground node $g$ and satellite position $X$ in its visibility cap,
let $\alpha(g,X)$ denote the {\em look angles of $X$ from $g$}, i.e., the
azimuth measured clockwise from local North and the elevation above the
local horizontal plane (the topocentric East--North--Up frame at $g$).
For a link between two LOS satellites, the pointing direction to the next hop is also
defined in an appropriate body frame.

\begin{definition}[Routing table]
\label{def:table}
The \emph{routing table} $\Tab_t$ generated by $\Gamma$
records, for each source--destination pair of interest and for each time, one entry per
hop of the corresponding route. Each entry consists of the label (or mark) of the
next relay (the mark if this relay is a terrestrial gateway),
the look angles toward it, and the time derivatives of these angles (defined for almost every $t$). 
\end{definition}

Note that the observable $\Tab_t$ is obtained from $\gamma_t$ by appending, to each
hop, the look angles toward the next relay and their derivatives. In this
sense $\Gamma$ generates both observables.
Conversely, $\gamma_t=\pi(\Tab_t)$, where $\pi$ is a projection
that ignores these angles. 

Note that the angle derivatives are not functions of the snapshot
$\Psi^{\mathrm{lab}}_t$ alone. However, the
current phase determines the entire future trajectory and hence every
derivative along it. The derivatives are therefore deterministic
functions of $R_t$, so that
\begin{equation}
\label{eq:olam}
\Tab_t=\mathcal{O}_\Tab(R_t),
\end{equation}
for some measurable function $\mathcal{O}_\Tab$.

\subsubsection{Shortest-Path Routing}
Shortest-path routing is the canonical example. Let $\mathbb{T}(t)$ be
the symmetric $(\min,+)$ matrix on $\Sat\cup\Gw$ with entries
$\|u^t-v^t\|$ if $u,v$ are LOS at time $t$ and $+\infty$ otherwise,
where $u^t$ denotes the position of node $u$ at time $t$. Here $\oplus$
and $\otimes$ denote the tropical $(\min,+)$ matrix operations,
$(\mathbb{M}\oplus\mathbb{N})_{uv}=\min(\mathbb{M}_{uv},\mathbb{N}_{uv})$
and
$(\mathbb{M}\otimes\mathbb{N})_{uv}=\min_{w}(\mathbb{M}_{uw}+\mathbb{N}_{wv})$
\cite{synclin}. The tropical sum
$\mathbb{S}(t)=\bigoplus_{n=0}^{N_oN_s+K}\mathbb{T}(t)^{\otimes n}$
returns all shortest-path lengths among relays, and a rectangular matrix
$\mathbb{A}(t)$ of terminal-to-relay LOS distances extends these
shortest-path lengths to the terminals via
$\mathbb{A}\otimes\mathbb{S}$.
{ Let $\Gamma_{\mathrm{sp}}$ denote this routing strategy.

Note that there might be ties (several routes 
qualifying as the shortest between two given nodes).
In what follows, we fix a total order on $\Routes$ and take care of such ties by
selecting the least shortest path.}

\begin{proposition}
Under the foregoing assumptions,
$\Gamma_{\mathrm{sp}}$ is an admissible
mixed-relay routing rule.
\end{proposition}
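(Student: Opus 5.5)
We need to check the three defining properties of admissibility for $\Gamma_{\mathrm{sp}}$: it is deterministic, time-invariant, and measurable. The first two are structural; measurability is the only substantive point.

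\emph{Deterministic.} No external randomization enters the construction. Ties are resolved by the fixed total order on $\Routes$.

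\emph{Time-invariant.} Time enters only through the matrices $\mathbb{T}(t)$ and $\mathbb{A}(t)$. Their entries are the Euclidean distances $\|u^t-v^t\|$, or $+\infty$ when there is no LOS. The gateways and terminals are fixed, so these entries depend on $t$ only through the satellite positions $\Psi^{\mathrm{lab}}_t$. Hence $\mathbb{T}(t)=\mathbb{T}(\Psi^{\mathrm{lab}}_t)$ and $\mathbb{A}(t)=\mathbb{A}(\Psi^{\mathrm{lab}}_t)$.

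\emph{Measurable.} The plan is to write $\Gamma_{\mathrm{sp}}$ as a finite composition of Borel operations. Fix a pair $(v_s,v_e)$ and work on the configuration space $(r\mathbb{S}^2)^{N_oN_s}$.

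Step 1: entries of $\mathbb{T}$ and $\mathbb{A}$. For a ground node $g$ and satellite $X$, the LOS indicator is $\mathbf{1}\{\langle X,g\rangle\ge c\}$, or its elevation-mask analogue, which is the indicator of a closed set. For two satellites $X,Y$, the event that the segment $[X,Y]$ avoids the open Earth ball is $\{\min_{s\in[0,1]}\|sX+(1-s)Y\|\ge e\}$. This minimum is a continuous function of $(X,Y)$, so the event is closed. Each entry is therefore a Borel map into $[0,+\infty]$, namely a product of a continuous distance with a Borel indicator, with the convention $0\cdot\infty=+\infty$ handled casewise.

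Step 2: route lengths. For each loop-free sequence $p\in\Routes$ from $v_s$ to $v_e$, define the length
\[
L_p=\sum_{\text{hops }(u,v)\in p}\mathbb{T}_{uv}\ \ (\text{or }\mathbb{A}\text{ for terminal hops}),
\]
which is Borel as a finite sum. The tropical quantity $(\mathbb{A}\otimes\mathbb{S}\otimes\mathbb{A}^\top)_{v_sv_e}$ equals $m=\min_{p}L_p$. This holds because the weights are nonnegative, so shortest walks can be taken loop-free, which is why powers up to $N_oN_s+K$ suffice. As a finite minimum of Borel maps, $m$ is Borel, and the connectivity assumption ensures $m<\infty$.

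Step 3: the selected route. $\Gamma_{\mathrm{sp}}$ returns the least $p$, in the fixed order, with $L_p=m$. Hence, for each $p$,
\[
\{\Gamma_{\mathrm{sp}}=p\}=\{L_p=m\}\cap\bigcap_{q<p}\{L_q>m\}.
\]
This is a finite Boolean combination of Borel sets. Since $\Routes$ is finite and discrete, $\Gamma_{\mathrm{sp}}$ is measurable.

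Finally, because $\Xi^{\mathrm{lab}}$ is measurable on $S$ (it is continuous on the fundamental domain, with relabeling on the boundary), composing gives that $\Gamma_{\mathrm{sp}}(\Xi^{\mathrm{lab}}(\cdot),v_s,v_e)$ is a measurable observable of $R_t$.

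The main obstacle is Step 3, specifically the discontinuities. On tie sets and LOS boundaries, $\Gamma_{\mathrm{sp}}$ is genuinely discontinuous, so the argument cannot go through continuity. It must rest on the finite partition into Borel cells defined by equalities and strict inequalities of Borel functions, together with correct handling of the $+\infty$ entries.
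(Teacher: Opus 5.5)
Your proof is correct and follows essentially the same route as the paper's: measurability of the entries of $\mathbb{T}$ and $\mathbb{A}$, shortest-path lengths as finite minima of finite sums over the finitely many loop-free routes (which also justifies truncating the tropical sum), and a measurable tie-broken selection among these finitely many candidates. You simply spell out details that the paper's three-line proof leaves implicit, such as the closed LOS conditions, the Borel cells $\{L_p=m\}\cap\bigcap_{q<p}\{L_q>m\}$, and the deterministic and time-invariant checks.
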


\begin{IEEEproof}
Each entry of $\mathbb{T}$ and $\mathbb{A}$ is a measurable function of
$\Psi^{\mathrm{lab}}$. Tropical sums and products are finite minima and
sums of such entries. Picking the least shortest path with respect
to the fixed order is a measurable selection among the finitely many
loop-free sequences, which also justifies truncating the tropical sum
at $N_oN_s+K$.
\end{IEEEproof}

\subsubsection{Greedy routing}
Another admissible rule is greedy nearest-to-destination forwarding over
$\Sat\cup\Gw$: at each step, the current node forwards to the LOS node
closest to the destination in central angle among those strictly
reducing it, with a deterministic tie-break. Here, the central
angle between two nodes $u$ and $v$ is the angle
$\arccos\big(\langle u^t,v^t\rangle/(\|u^t\|\|v^t\|)\big)$ subtended at
the Earth's center by their position vectors, where
$\langle\cdot,\cdot\rangle$ denotes the Euclidean inner product.
Since a greedy step may
reach a node with no {further} neighbor, the rule is completed by
returning a designated outage route in that event, so that $\Gamma$ is
defined for every configuration. This rule is used in Sec.~\ref{sec:num}.

\section{Main Results}

\subsection{General Properties of Observables}

\begin{proposition}
\label{prop:transfer}
Let $(\mathcal{Z}_t)$ be any observable 
with $\mathcal{Z}_t=\mathcal{O}(R_t)$
(e.g., $\mathcal{Z}_t=\gamma_t$ or $\Tab_t$).
\begin{enumerate}
\item If $\rho\in\Q$, then $T_\star$ (defined in Lemma \ref{lem:cb}) is a period of $\{\mathcal{Z}_t\}$.
If $\{\mathcal{Z}_t\}$ admits a minimal period, the latter divides
$T_\star$, and the divisor can be proper when $\mathcal{O}$ is not injective along the orbit.
\item If $\rho\notin\Q$ and $h$ is
$\cQ\circ\mathcal{O}^{-1}$-integrable, then for $\cQ$-almost every initial phase,
\begin{equation}
\lim_{T\to\infty}\frac1T\!\int_0^T\!\! h(\mathcal{Z}_t)\dd t
=\int h\dd\big(\cQ\circ\mathcal{O}^{-1}\big).
\label{eq:birkhoff}
\end{equation}
\item If moreover $h\circ\mathcal{O}$ is bounded and the closure of its
discontinuity set is $\cQ$-null, then \eqref{eq:birkhoff} holds for
\emph{every} initial phase $x\in S$.
\end{enumerate}
\end{proposition}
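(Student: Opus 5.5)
The plan is to treat the three items separately, each time reducing to properties of the flow $\{R_t\}$ through the identity $\mathcal{Z}_t=\mathcal{O}(R_t)$.

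\emph{Item 1.} When $\rho\in\Q$, Lemma~\ref{lem:cb} gives $R_{t+T_\star}(x)=R_t(x)$ for all $t$ and $x$. Hence
\[
\mathcal{Z}_{t+T_\star}=\mathcal{O}(R_{t+T_\star})=\mathcal{O}(R_t)=\mathcal{Z}_t,
\]
so $T_\star$ is a period. For the divisibility claim, assume $\{\mathcal{Z}_t\}$ is not constant and admits a minimal period $\tau>0$. I will show that the set of periods is an additive subgroup of $\R$: sums and differences of periods are periods. This subgroup contains $\tau$ and has a smallest positive element, namely $\tau$ itself. A subgroup of $\R$ with a smallest positive element is cyclic, so it equals $\tau\mathbb{Z}$. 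Since $T_\star$ is a period, $T_\star\in\tau\mathbb{Z}$, i.e.\ $\tau$ divides $T_\star$.

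The proper-divisor remark only needs an example. A constant $\mathcal{O}$ works. Alternatively, take an $\mathcal{O}$ invariant under the half-period shift along the orbit, such as a function of $2\bar\omega_t$. Such an $\mathcal{O}$ identifies $R_t$ and $R_{t+T_\star/2}$, so it is not injective along the orbit, and its period is a proper divisor of $T_\star$.

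\emph{Item 2.} I will apply Birkhoff's ergodic theorem for flows to the function $f=h\circ\mathcal{O}$ on $S$. By the change-of-variables formula, $f\in L^1(\cQ)$ exactly when $h\in L^1(\cQ\circ\mathcal{O}^{-1})$, and in that case $\int f\dd\cQ=\int h\dd(\cQ\circ\mathcal{O}^{-1})$. Lemma~\ref{lem:cb} says $\cQ$ is invariant and ergodic, since a uniquely ergodic measure is ergodic. Birkhoff then gives
\[
\frac1T\int_0^T f(R_t x)\dd t\to\int f\dd\cQ
\]
for $\cQ$-a.e.\ $x$. One measurability point must be checked: $(t,x)\mapsto f(R_tx)$ has to be jointly measurable. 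This holds because $(t,x)\mapsto R_tx$ is continuous and $f$ is measurable.

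\emph{Item 3.} This is the main step, since unique ergodicity only gives convergence at every point for continuous functions. I will use a sandwich argument. Let $D$ be the closure of the discontinuity set of $f$, so that $\cQ(D)=0$. Fix $\epsilon>0$.

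First I choose an open set $U\supset D$ with $\cQ(U)<\epsilon$. This is possible by outer regularity of $\cQ$, since $D$ is compact and null.

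Next I build continuous functions $g_-\le f\le g_+$ that agree with $f$ off $U$ and satisfy $|g_\pm|\le\|f\|_\infty$. The way to do this is to note that $f$ is continuous on the compact set $S\setminus U$. Then I take a continuous cutoff $\chi$ equal to $1$ near $D$ and to $0$ off $U$, and set
\[
g_\pm=(1-\chi)\tilde f\pm\chi\|f\|_\infty,
\]
where $\tilde f$ is a Tietze extension of $f|_{S\setminus U'}$, clipped to $[-\|f\|_\infty,\|f\|_\infty]$, for an intermediate closed set $U'$. The delicate point is to make sure $g_-\le f\le g_+$ really holds on the transition region where $0<\chi<1$. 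For this I arrange $\tilde f=f$ on the support of $1-\chi$, by choosing $\chi=1$ on a neighborhood of $D$ that contains $U'$. With that choice, $\int(g_+-g_-)\dd\cQ\le 2\|f\|_\infty\,\epsilon$.

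Finally, unique ergodicity from Lemma~\ref{lem:cb} gives uniform convergence of time averages of $g_\pm$ for every $x$. Therefore the $\limsup$ and $\liminf$ of the time averages of $f$ lie within $2\|f\|_\infty\epsilon$ of $\int f\dd\cQ$. Letting $\epsilon\to0$ gives \eqref{eq:birkhoff} for every $x\in S$.

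The torus should be treated as compact, with $S$ identified with $\R^2/\Lambda$, so that continuity across the wrap-around is handled consistently.
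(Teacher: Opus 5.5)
Your proposal is correct and follows the paper's proof essentially step for step: $R_{t+T_\star}=R_t$ plus the subgroup-of-periods argument for item 1, Birkhoff applied to $h\circ\mathcal{O}$ for item 2, and for item 3 the sandwich $(1-\chi)f\pm\chi\|f\|_\infty$ built from an outer-regular open set and a Urysohn cutoff, combined with unique ergodicity. The Tietze detour is unnecessary (and as written $U'$ would have to be open, not closed, for Tietze to apply): since $\chi\equiv1$ on the closed set $D$ and $f$ is bounded, $(1-\chi)f$ is already continuous, which is exactly what the paper uses; also, a constant $\mathcal{O}$ has no minimal period, so only your $2\bar\omega_t$ example actually illustrates a proper divisor.
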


\begin{IEEEproof}
(1)--(2): in the rational case, since $R_{t+T_\star}=R_t$,
$\mathcal{Z}_{t+T_\star}=\mathcal{O}(R_{t+T_\star})
=\mathcal{O}(R_t)=\mathcal{Z}_t$, so $T_\star$ is a period of
$\{\mathcal{Z}_t\}$. Note that this identity (and the conclusion)
holds for every observable, factor or not. The divisibility statement follows because the periods of
$\{\mathcal{Z}_t\}$ form a closed subgroup of $\R$ containing
$T_\star$. Statement \eqref{eq:birkhoff} is Birkhoff's pointwise ergodic theorem applied
to $h\circ\mathcal{O}$ \cite{Katok}.
(3): by Lem.~\ref{lem:cb}, in the irrational case, the flow is uniquely ergodic. Therefore,
time averages of continuous functions of the state converge to their space averages
for every $x$. Write $f=h\circ\mathcal{O}$ and let $F$ denote the
closure of its discontinuity set. Given $\varepsilon>0$, outer
regularity provides an open set $V\supset F$ with
$\cQ(V)<\varepsilon$, and Urysohn's lemma provides a continuous
$\phi:S\to[0,1]$ equal to $1$ on $F$ and to $0$ off $V$. The functions
$f^{\pm}=(1-\phi)f\pm\phi\|f\|_\infty$ are continuous, satisfy
$f^-\le f\le f^+$ everywhere, and their integrals differ from
$\int f\dd\cQ$ by at most $2\|f\|_\infty\varepsilon$. Applying unique
ergodicity to $f^{\pm}$ and letting $\varepsilon\to0$ yields
\eqref{eq:birkhoff} for every $x$.
\end{IEEEproof}

\begin{remark}
Irrationality does \emph{not} rule out periodicity for an observable: a constant observable is
periodic. Less trivially, any observable depending on
$\bar\theta_t$ only is a factor which is periodic with period $2\pi/(N_o\bar v_\theta)$,
even for irrational $\rho$. As we shall see, in this case, non-periodicity depends on
the richness of the observable and must be proved separately. We do this next.
\end{remark}

\subsection{The Routing Table as a Factor and a Measurable Conjugate of the Flow }

A preliminary observation which will be used below is that the state space for routing tables
is a Polish space (so that its associated Borel space is standard).
This follows from the fact that the routing table has a finite collection (one per node pair)
of entries. Each entry is a finite collection (one element per hop) of vectors along the route. Each vector
has a discrete and finite component (the label of the next hop) and four bounded continuous
components (the two look angles toward the next hop and their derivatives).
The map $\mathcal{O}_\Tab$ defined in (\ref{eq:olam}) is hence a measurable
map from a standard Borel space to a standard Borel space.

The proof of the property that the routing table is a
factor of the initial flow and is measurably conjugate to it relies on the fact
that the map $\mathcal{O}_\Tab$ defined in (\ref{eq:olam}) is injective. The flow
of the routing table is then $\psi_t:=\mathcal{O}_\Tab\circ R_t\circ\mathcal{O}_\Tab^{-1}$.
The precise statement is given in Cor.~\ref{cor:conj} below.

We start with a technical lemma. In what follows, we assume throughout that the routing table
contains at least one entry associated with a ground node.

\begin{lemma}[Injection]
\label{lem:obs}
Let $D=D_{\mathrm{zen}}\subset S$ denote the set of phases for
which some satellite lies on the local vertical (zenith) of some node
of $\Gw\cup\Ue$.
The map $\mathcal{O}_\Tab$ defined in (\ref{eq:olam}) is injective on $S\setminus D$ and $\cQ(D)=0$.
In addition, $R_t$ is a measurable function of $\Tab_t$ for a.e.\ $t$.
\end{lemma}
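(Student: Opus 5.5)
The plan has three parts: show $\cQ(D)=0$, show injectivity off $D$ by reconstructing the phase from one ground-node entry of $\Tab_t$, and then read off the measurability of the inverse.

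\emph{Step 1: $D$ is null.} For a fixed ground node $g\in\Gw\cup\Ue$ and a fixed lattice site $(i,j)$, the zenith condition says that $X_{(i,j)}(x)$ lies on the ray through $g$, that is, $X_{(i,j)}(x)=(r/e)\,g$. This is two independent scalar equations in the two coordinates $(\bar\theta,\bar\omega)$. Because $X_{(i,j)}$ is a real-analytic immersion of the torus onto the orbital shell, the solution set is finite: locally it consists of at most the few points where the orbit-plane family passes over $g$. A finite union over $g$ and $(i,j)$ of finite sets is $\cQ$-null, so $\cQ(D)=0$.

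\emph{Step 2: injectivity off $D$.} Fix a ground node $g$ with an entry in the table, and take $x\in S\setminus D$. The entry gives the label $(i,j)$ of the first relay satellite $X=X_{(i,j)}(x)$ and its look angles $\alpha(g,X)=(\text{az},\text{el})$. Since $g$ is known and $r$ is fixed, the elevation fixes the slant range, and together with the azimuth it fixes the point $X\in r\mathbb{S}^2$ uniquely. From $X$ and the label $(i,j)$, I would solve $X_{(i,j)}(x)=X$ for $x$. A point of the shell determines the orbital plane and the on-orbit phase up to the ascending/descending-pass ambiguity. This is because each point at latitude below $\varphi$ lies on exactly two orbit planes of inclination $\varphi$: one with the satellite ascending, one descending.

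This two-to-one ambiguity, together with the degeneracy at zenith, is the main obstacle. The azimuth alone is ill-defined at el $=90^\circ$, which is exactly why $D$ is excluded. To break the ascending/descending tie, I would use the angle rates. The time derivative of $\alpha(g,X_{(i,j)}(R_t))$ at $t$ is a function of the velocity of $X$ in the Earth frame. That velocity has a northward component of opposite signs on the two candidate planes, which differ by reflection of the track heading. Hence the derivative of the look angles differs between the two candidates, except where the northward velocity vanishes. I would check that this exceptional case happens only at the turning latitude $\pm\varphi$, where the two candidates coincide anyway. Thus $(\alpha,\dot\alpha)$ pins down $(\Omega_i,u_{ij})$ modulo $2\pi$. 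Subtracting the known offsets $2\pi i/N_o$, $2\pi j/N_s$ and $c_{ij}$ then gives $(\bar\theta,\bar\omega)\in S$, and with the lattice-site labelling convention this phase is unique.

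\emph{Step 3: measurability.} The explicit inverse built in Step 2 is composed of continuous operations (spherical-to-Cartesian conversion, and a sign choice based on $\dot\alpha$) on $\mathcal{O}_\Tab(S\setminus D)$, so it is Borel. Alternatively, one can cite Lusin--Souslin: an injective Borel map between standard Borel spaces has Borel image and Borel inverse, using the Polish structure observed above. Hence $R_t=\mathcal{O}_\Tab^{-1}(\Tab_t)$ whenever $R_t\notin D$.

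It remains to show that $\{t:R_t\in D\}$ is Lebesgue-null for every initial phase. Since $D$ is finite by Step 1 and the flow is a nonconstant translation, each point of $D$ is visited at a discrete (countable) set of times. So the relation holds for a.e.\ $t$. For the irrational case, this last statement can also be cross-checked via Prop.~\ref{prop:transfer}(2).
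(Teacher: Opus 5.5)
Your proposal is correct and follows essentially the paper's route: $D$ is finite, the look angles fix the first-hop position because $g$ lies strictly inside $r\mathbb{S}^2$, and the angle rates break the ascending/descending ambiguity through the sign of the northward velocity (the look-angle chart being a local diffeomorphism away from the zenith); the label then fixes the phase in $S$, and discreteness of the visit times to $D$ together with Luzin--Souslin gives measurability. One small correction: $X_{(i,j)}$ is not an immersion, since $\partial_{\bar\theta}X_{(i,j)}$ and $\partial_{\bar\omega}X_{(i,j)}$ are parallel along the turning curves $\cos u=0$, so justify the finiteness of $D$ directly from the at-most-two-to-one count of $(\Omega,u)$ over a point of the shell that you already use in Step~2 (the paper also argues by a direct count).
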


The set $D$ is finite: the zenith direction of a ground node
meets the sphere $r\mathbb{S}^2$ at a single point, and at most one
phase of $S$ places a given satellite at this point, so that $D$
contains at most $N_oN_s(K+L)$ phases. Hence $\cQ(D)=0$ and, since $D$
is finite, the visit times $\{t:R_tx\in D\}$ form a discrete set for
every $x$, which proves the final claim of the lemma and is used in the proofs of
Cor.~\ref{cor:exact} and Thm.~\ref{thm:aper}.

\begin{IEEEproof}
Consider an entry of the table associated with a ground node $g$ and 
the phase $x$. This entry contains information on the first hop, which is necessarily a satellite, 
and in particular the label $(i_g,j_g)=(i_g,j_g)(x)$ of this satellite and the associated look angles.
Since $g$ lies strictly inside the sphere $r\mathbb{S}^2$, the ray
leaving $g$ in the direction prescribed by the look angles meets
$r\mathbb{S}^2$ at exactly one point. These angles therefore uniquely determine 
the position of the first hop satellite, say $X_{(i_g,j_g)}=X_{(i_g,j_g)}(x)$, and hence its latitude
$\lambda=\lambda(x,g)$ and longitude $\ell=\ell(x,g)$ at that phase. We want to show that this information together with the angle
derivatives uniquely determines the phase $x\in S$, at least outside $D$.
On an orbit of inclination $\varphi$ and ascending node $\Omega$, the point with on-orbit
phase $u$ has its latitude and longitude determined by spherical trigonometry as
\begin{equation}
\sin\lambda=\sin\varphi\sin u,\qquad
\tan(\ell-\Omega)=\cos\varphi\tan u.
\label{eq:sph}
\end{equation}
If $|\sin\lambda|=\sin\varphi$, the satellite is at {an} extreme
point of its orbit and the inversion is direct: the first relation of
\eqref{eq:sph} gives $u=\operatorname{sign}(\lambda)\,\pi/2$, and the
extreme point of an orbit with ascending node $\Omega$ has longitude
$\ell=\Omega+\operatorname{sign}(\lambda)\,\pi/2$, so that
$\Omega=\ell-\operatorname{sign}(\lambda)\,\pi/2$. In this case, no
angle rate is needed. In the remaining case,
$|\sin\lambda|<\sin\varphi$ holds strictly, and
consequently the first relation in \eqref{eq:sph} admits exactly two
solutions for $u$: an ascending one, with $\cos u>0$, and a descending
one, with $\cos u<0$. The angles alone thus leave a two-fold ambiguity.

The derivatives of the angles resolve this ambiguity. Since $x\notin D$ excludes
the zenith configuration, the map from look angles to position is a
local diffeomorphism. Indeed, the look angles form the polar
chart of the observer's celestial sphere with pole at the zenith: the
circles of constant elevation shrink to a point at the pole. The
Jacobian of the chart is proportional to $\cos E$, where $E$ denotes
the elevation, and vanishes at $E=\pi/2$. The azimuth is undefined
there, so an azimuth rate carries no positional information at the
pole. Away from the zenith, the chart is a local diffeomorphism onto
its image. Therefore, the angle rates determine $\dot\lambda$.
Differentiating the first relation of
\eqref{eq:sph} along the flow yields
$\dot\lambda\cos\lambda=\sin\varphi\cos u\,\dot u$ with
$\dot u=\bar v_\omega>0$, and therefore
$\operatorname{sign}\dot\lambda=\operatorname{sign}\cos u\neq0$. The
sign of $\dot\lambda$ therefore resolves the two-fold ambiguity, after
which \eqref{eq:sph} determines $u_g=u_g(x)$ and $\Omega_g=\Omega_g(x)$ uniquely.

Finally, the label $(i_g,j_g)$ allows one to determine
$\bar\theta_g=\Omega_g-2\pi i_g/N_o$ and
$\bar\omega_g=u_g-2\pi j_g/N_s-c_{i_gj_g}$, modulo the torus periods, so
that $x$ in particular is uniquely determined.

The final claim of the lemma follows since
$\{t:R_tx\in D\}$ is discrete for every $x$.
\end{IEEEproof}

\begin{remark}
The set $D$ collects the phases at which a ground-anchored entry
fails to \emph{determine} the phase. It is different from the discontinuity set
of $\mathcal{O}_\Tab$: the map $x\mapsto\Tab(x)$ is discontinuous at
the phases where a label wraps around the fundamental domain and at
the phases where a route hands over from one relay to another. At such
phases, the table is still well defined, since the labelling is
single-valued on $S$ and routing ties are broken deterministically,
and the lemma still inverts the ground-anchored entry. So, these phases
require no exclusion. The discontinuity set is a finite union of
closed curves, hence closed and $\cQ$-null. This result is useful 
at exactly two points: it is leveraged in the null-closure hypothesis of statement (3) in
Prop.~\ref{prop:transfer}. It is also leveraged to show that the angle rates of
Def.~\ref{def:table} are defined for almost every $t$.
\end{remark}

\begin{remark}
Each piece of information in the routing table in 
Def.~\ref{def:table} is needed to remove ambiguities. The
inversion uses the relay \emph{label}: without it, the angles
leave an $N_oN_s$-fold ambiguity about which satellite is observed, and
the lemma fails. The labels carried by a routing table are thus not
bookkeeping but what makes the observable rich enough to reconstruct the phase. Dually, the
angle rates enter the proof through the sign of $\dot\lambda$: the
full rates are thus sufficient but not necessary for inversion. A
single bit (the ascending/descending direction of the trajectory) is 
sufficient. They are retained in Def.~\ref{def:table} for practical reasons: this information is
needed at the user terminal to predict the tracking of the first serving satellite on the route and to adapt
the beamforming to it.
\end{remark}

\begin{remark}
In contrast, the label-only
route is an observable, but neither a factor map, nor measurably conjugate to the initial flow.
Let $\gamma(x)$ denote the route of phase $x$.
Suppose then that some flow
$\{\psi_t\}$ on route sequences satisfies the relation
$\gamma\circ R_t=\psi_t\circ\gamma$, and take two phases $x\neq y$ with
$\gamma(x)=\gamma(y)$. Then
$\gamma(R_tx)=\psi_t(\gamma(x))=\psi_t(\gamma(y))=\gamma(R_ty)$ for all
$t$, which is a contradiction because the route is piecewise constant in the phase
and two phases in the same cell exit it at different times. Injectivity
does not hold here as the route takes finitely many values.
\end{remark}

\subsection{Consequences}
We recall that we assume throughout that the routing table
contains at least one entry associated with a ground node.

\begin{corollary}[Conjugacy of the routing table]
\label{cor:conj}
Let $\Gamma$ be an admissible routing rule. The map
$\psi_t:=\mathcal{O}_\Tab\circ R_t\circ\mathcal{O}_\Tab^{-1}$ is well
defined on the image of $S\setminus D$ and is a flow. In addition, for each $t$, the relation
$$\Tab_t=\mathcal{O}_\Tab\circ R_t=\psi_t\circ\mathcal{O}_\Tab=\psi_t(\Tab_0)$$
holds off a $\cQ$-null set of phases and the
routing table process $\{\Lambda_t\}$ is a measure-preserving dynamical system
w.r.t. $\{\psi_t\}$ which is measurably conjugate to the flow $\{R_t\}$.
\end{corollary}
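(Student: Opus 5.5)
The plan is to build everything on Lem.~\ref{lem:obs}. Set $S_0:=S\setminus\bigcup_{t\in\R}R_{-t}D$, the set of phases whose whole orbit avoids $D$. Since $D$ is finite, $\bigcup_t R_{-t}D$ is a finite union of orbit lines (flow trajectories). Each such line is $\cQ$-null: in the irrational case it is a dense but one-dimensional immersed line, and in the rational case it is a closed curve. Hence $\cQ(S\setminus S_0)=0$, and $S_0$ is $R_t$-invariant by construction. On $S_0$ the map $\mathcal{O}_\Tab$ is injective by Lem.~\ref{lem:obs}. Moreover, $S_0$ is Borel, since a countable union of translates over rational $t$ suffices because orbit lines are determined by their points up to translation. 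Then the Lusin--Souslin theorem, applied between the standard Borel spaces discussed above, shows that $\mathcal{O}_\Tab(S_0)$ is Borel and $\mathcal{O}_\Tab^{-1}$ is Borel on it.

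Next we define $\psi_t:=\mathcal{O}_\Tab\circ R_t\circ\mathcal{O}_\Tab^{-1}$ on $\mathcal{O}_\Tab(S_0)$. The statement phrases this on the image of $S\setminus D$; to make $\psi_t$ map the image into itself, we restrict to the invariant subset $S_0$, which differs from $S\setminus D$ only by a null set. The flow property is then immediate from the definition:
\[
\psi_0=\mathrm{id}, \qquad \psi_{t+s}=\mathcal{O}_\Tab R_tR_s\mathcal{O}_\Tab^{-1}=\psi_t\circ\psi_s .
\]
The middle factor $\mathcal{O}_\Tab^{-1}\mathcal{O}_\Tab$ cancels because $R_s$ preserves $S_0$. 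Measurability of $(t,z)\mapsto\psi_t(z)$ follows from joint continuity of $(t,x)\mapsto R_tx$ together with Borel measurability of $\mathcal{O}_\Tab$ and its inverse. For $x\in S_0$ we obtain $\Tab_t=\mathcal{O}_\Tab(R_tx)=\psi_t(\mathcal{O}_\Tab(x))=\psi_t(\Tab_0)$, so the relation holds off the $\cQ$-null set $S\setminus S_0$.

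For the measure-preserving part, we push $\cQ$ forward to $\mu:=\cQ\circ\mathcal{O}_\Tab^{-1}$. Then
\[
\mu(\psi_t^{-1}B)=\cQ\big(R_t^{-1}\mathcal{O}_\Tab^{-1}B\big)=\cQ(\mathcal{O}_\Tab^{-1}B)=\mu(B),
\]
using the $R_t$-invariance of the uniform measure $\cQ$, which holds because $R_t$ is a translation. The map $\mathcal{O}_\Tab:(S_0,\cQ,R_t)\to(\mathcal{O}_\Tab(S_0),\mu,\psi_t)$ is then a bimeasurable bijection intertwining the flows on full-measure invariant sets, which is exactly measurable conjugacy.

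The main obstacle is the invariance and measurability of the good set. Lem.~\ref{lem:obs} only excludes $D$ itself, not its orbit, so the proof has to check that saturating $D$ under the flow still produces a null Borel invariant set. It also has to check that Lusin--Souslin applies, which requires the Polish structure of the table space noted earlier and the Borel measurability of $\mathcal{O}_\Tab$ from (\ref{eq:olam}).
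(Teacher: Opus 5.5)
Your proof is correct and follows the paper's argument: Lem.~\ref{lem:obs} gives injectivity of $\mathcal{O}_\Tab$ off $D$, the Lusin--Souslin theorem gives a Borel image and a Borel inverse, and measure preservation comes from the translation invariance of $\cQ$. The one difference is that you remove the whole flow-saturation $\bigcup_{t\in\R}R_{-t}D$ rather than the paper's $t$-dependent set $D\cup R_{-t}D$, which gives a single invariant conull set on which the group law $\psi_{t+s}=\psi_t\circ\psi_s$ and the factor identity hold for all $t$ simultaneously; one small repair is that this saturation is Borel because each orbit is the continuous image of $\R=\bigcup_{n\in\mathbb{Z}}[n,n+1]$ and hence $\sigma$-compact, not because of a union over rational $t$, which would only produce a countable set.
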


\begin{IEEEproof}
The injectivity of $\mathcal{O}_\Tab$ on $S\setminus D$ is proved in Lemma~\ref{lem:obs}.
Hence, for all $t$, the map $\psi_t$ is well defined on the image of
$S\setminus(D\cup R_{-t}D)$, whose complement is $\cQ$-null because
the flow preserves $\cQ$. On that set, the factor identity holds by
construction. Finally, an injective factor map is a measurable
conjugacy: by the Luzin--Souslin theorem
\cite[Thm.~15.1]{Kechris}, an injective Borel map between standard
Borel spaces has a Borel image and a Borel inverse on that image, so
that $\mathcal{O}_\Tab^{-1}$ is measurable and conjugates the two
flows in the sense of \cite{Katok}.
\end{IEEEproof}

\begin{corollary}[Exact period, rational case]
\label{cor:exact}
Let $\rho\in\Q$. For any admissible rule, $\{\Tab_t\}$ (as a whole) is periodic
with minimal period \emph{exactly} $T_\star$.
\end{corollary}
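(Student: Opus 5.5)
By Prop.~\ref{prop:transfer}(1), $T_\star$ is a period of $\{\Tab_t\}$. The periods of $\{\Tab_t\}$ form a closed subgroup of $\R$ containing $T_\star$, and $\Tab_t$ is non-constant (the look angles toward the first hop vary). So the set of positive periods has a least element $\tau$ that divides $T_\star$. It remains to show $\tau \ge T_\star$, i.e.\ that any $\tau\in(0,T_\star]$ which is a period of $\{\Tab_t\}$ must equal $T_\star$. The idea is to transfer $\Tab_{t+\tau}=\Tab_t$ back to the phase using the injectivity in Lemma~\ref{lem:obs}.

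Fix the initial phase $x$ and suppose $\Tab_{t+\tau}=\Tab_t$ for all $t$. The set of bad times
\[
B=\{t: R_tx\in D\}\cup\{t: R_{t+\tau}x\in D\}
\]
is discrete. This is the final claim of Lemma~\ref{lem:obs}, applied to both $x$ and $R_\tau x$, and it uses that $D$ is finite. Pick any $t\notin B$. Then $R_tx$ and $R_{t+\tau}x$ both lie in $S\setminus D$, and
\[
\mathcal{O}_\Tab(R_{t+\tau}x)=\Tab_{t+\tau}=\Tab_t=\mathcal{O}_\Tab(R_tx).
\]
Injectivity of $\mathcal{O}_\Tab$ on $S\setminus D$ then gives $R_{t+\tau}x=R_tx$. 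Since $R_t$ is a bijective translation of the torus, applying $R_{-t}$ yields $R_\tau x=x$.

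A translation on the torus fixing one point fixes every point. So $R_\tau$ is the identity and $\tau$ is a period of the flow $\{R_t\}$. By minimality of $T_\star$ in Lemma~\ref{lem:cb}, $\tau\ge T_\star$, hence $\tau=T_\star$.

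The main obstacle is making sure the hypothesis of Lemma~\ref{lem:obs} applies, namely that one single time $t$ exists at which both phases avoid $D$. Because $D$ is finite and the orbit $\{R_tx\}$ is a closed curve traversed at constant speed with no stopping, each point of $D$ is hit at isolated times. So $B$ is a countable discrete set and such a $t$ certainly exists.

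A secondary subtlety is that the angle rates entering $\Tab_t$ are defined only for a.e.\ $t$. I would handle this by reading the equality $\Tab_{t+\tau}=\Tab_t$ on the full-measure set where both sides are defined, and choosing $t$ in the intersection of that set with $\R\setminus B$, which is still nonempty.
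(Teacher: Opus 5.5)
Your proof is correct and follows the paper's own argument: Lemma~\ref{lem:obs} turns equality of the ground-node entries at times $t$ and $t+\tau$ into $R_{t+\tau}x=R_tx$, and this contradicts the minimality of $T_\star$ as the flow's period. The differences are minor: you use a single good time and the translation structure of $R_t$ where the paper extends the identity to all $t$ by continuity; you also require, more carefully than the paper does, that both $R_tx$ and $R_{t+\tau}x$ avoid $D$; and your opening closed-subgroup step is not needed, since the core argument already shows directly that every positive period of the table is at least $T_\star$.
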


\begin{IEEEproof}
$T_\star$ is a period of $\{\Tab_t\}$ by statement (1) of Prop.~\ref{prop:transfer}.
Suppose now that $\Tab_{t+T}=\Tab_t$ for all $t$, for some
$0<T<T_\star$. At every time $t$ with $R_tx\notin D$,
Lem.~\ref{lem:obs} inverts the two equal ground-node entries and yields
$R_{t+T}x=R_tx$. Since the times with $R_tx\in D$ form a discrete set,
this equality holds outside a discrete set of times, and since both
sides are continuous in $t$, it extends to every $t$. Consequently $T$
is a period of the flow through $x$, which contradicts the fact that
$T_\star$ is the minimal period of the flow.
\end{IEEEproof}

In contrast, the minimal period of the label-only route $\gamma_t$ can
be a proper divisor of $T_\star$ (Remark~\ref{rem:isl} below illustrates
this mechanism): the coarse and fine observables genuinely differ in the
rational case.

\begin{theorem}[{Lack of periodicity of routing tables}]
\label{thm:aper}
Let $\Gamma$ be any admissible mixed-relay routing
rule. Assume that $\rho\notin\Q$ and that the routing table contains at
least one entry associated with a ground node.
Then $\{\Tab_t\}$, as a whole, is not periodic. In addition, no user
terminal has a periodic beamforming table.
\end{theorem}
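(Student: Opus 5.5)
The plan mirrors the proof of Cor.~\ref{cor:exact}, with Lem.~\ref{lem:obs} converting a hypothetical period of the table into a period of the flow, and Lem.~\ref{lem:cb} excluding the latter when $\rho\notin\Q$. Fix an initial phase $x\in S$ and suppose, for contradiction, that $\Tab_{t+T}=\Tab_t$ for all $t$, for some $T>0$.

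\textbf{Step 1: pass from the table to the flow.} By the discussion following Lem.~\ref{lem:obs}, the set $D$ is finite. Hence the times $\{t: R_tx\in D\}$ and $\{t:R_{t+T}x\in D\}$ are both discrete, and so is their union. For every $t$ outside this union, the ground-node entries of $\Tab_{t+T}$ and $\Tab_t$ coincide, and the inversion in Lem.~\ref{lem:obs} applies to both. It gives $R_{t+T}x=R_tx$ for all such $t$. Since $t\mapsto R_tx$ is a continuous translation on the torus, this equality extends to all $t$. In particular $R_Tx=x$.

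\textbf{Step 2: contradiction with irrationality.} $R_T$ is the translation by $(-\bar v_\theta T,\ \bar v_\omega T)$ modulo $(2\pi/N_o,\ 2\pi/N_s)$. The equality $R_Tx=x$ therefore forces
\[
\bar v_\theta T=2\pi k/N_o \quad\text{and}\quad \bar v_\omega T=2\pi m/N_s
\]
for integers $k,m$, with $m\neq0$ because $T>0$ and $\bar v_\omega>0$. Dividing yields $\rho=kN_s/(mN_o)\in\Q$, a contradiction. Alternatively, one can invoke minimality from Lem.~\ref{lem:cb}: a periodic orbit is a closed proper invariant set, which is impossible for a minimal flow on $S$. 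Either way, $\{\Tab_t\}$ is not periodic.

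\textbf{Step 3: beamforming tables of user terminals.} Here we restrict attention to the entries attached to a single user terminal $U\in\Ue$: the first-hop label, the look angles and their rates. Lem.~\ref{lem:obs} uses only one ground-anchored entry, and $U$ is a ground node. So the same argument, with $D$ still containing the zenith phases of $U$, shows that periodicity of $U$'s entry would again force $R_{t+T}x=R_tx$, and hence $\rho\in\Q$.

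\textbf{Expected obstacle.} The main delicate point is Step 1. Equality of the two entries must be inverted at both times $t$ and $t+T$ simultaneously. This requires excluding the union of the two discrete exceptional time sets, rather than only one of them. Continuity of $t\mapsto R_tx$ is then what upgrades the almost-everywhere identity to an exact one, and this continuity holds even though $\Tab_t$ itself is discontinuous at handovers and label wraps.
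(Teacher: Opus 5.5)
Your proposal is correct and follows essentially the same route as the paper: Lem.~\ref{lem:obs} plus continuity of $t\mapsto R_tx$ turns a period $T$ of the table into $R_Tx=x$, which is then impossible for $\rho\notin\Q$. The differences are only refinements. You exclude both discrete exceptional time sets, which is what injectivity on $S\setminus D$ at $t$ and $t+T$ actually requires. You get the contradiction by reading off $\rho=kN_s/(mN_o)$ from the translation vector, which avoids the paper's density argument. And you state explicitly the user-terminal claim that the paper leaves implicit.
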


\begin{IEEEproof}
If $\Tab_{t+T}=\Tab_t$ for all $t$, applying Lem.~\ref{lem:obs} to one
ground-node table entry on the full-measure set $\{t: R_tx\notin D\}$ gives
$R_{t+T}x=R_tx$ for a.e.\ $t$, hence all $t$ by continuity. In
particular, taking $t=0$, $R_Tx=x$, and the orbit of
$x$ is then a closed curve. However, the irrational flow is minimal
(Lem.~\ref{lem:cb}), so every orbit is dense. A closed
one-dimensional orbit cannot be dense in the two-torus.
\end{IEEEproof}

\begin{remark}
\label{rem:isl}
In Thm.~\ref{thm:aper}, the assumption of at least one ground station is not an artifact.
If there are no ground stations nor ground users,
routing is between satellites only, so that the routing table is 
oblivious of Earth rotation.
\end{remark}

\begin{corollary}[{Long-term route averages for a given initial phase}]
\label{cor:stats}
Let $\Gamma$ be admissible with routing ties confined to a closed $\cQ$-null set of phases.
If $\rho\notin\Q$, then, for every bounded route functional $h$ (hop count, end-to-end
length, number of ground relays, etc.) and \emph{every} initial phase, the long-run
time average of $h(\gamma_t)$ equals the ensemble average of $h$ under the push-forward
of $\cQ$ by the routing map.
\end{corollary}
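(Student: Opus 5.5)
The plan is to apply statement (3) of Prop.~\ref{prop:transfer} to the route observable $\mathcal{Z}_t=\gamma_t=\mathcal{O}_\gamma(R_t)$, where $\mathcal{O}_\gamma=\pi\circ\mathcal{O}_\Tab$ is the label-only route map, and the test function $h$ is the given bounded route functional. Two hypotheses must be checked: $f:=h\circ\mathcal{O}_\gamma$ is bounded, and the closure of the discontinuity set of $f$ is $\cQ$-null. Once both hold, statement (3) gives, for every $x\in S$,
\begin{equation*}
\lim_{T\to\infty}\frac1T\int_0^T h(\gamma_t)\dd t=\int h\dd\big(\cQ\circ\mathcal{O}_\gamma^{-1}\big),
\end{equation*}
which is exactly the claim. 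This uses unique ergodicity from Lem.~\ref{lem:cb}, valid since $\rho\notin\Q$.

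Boundedness is immediate, since $h$ is bounded and $\mathcal{O}_\gamma$ takes values in the finite set $\Routes$ (one route per source--destination pair). In fact $f$ takes finitely many values, so $f$ is continuous at $x$ whenever $\mathcal{O}_\gamma$ is locally constant near $x$. The main step is therefore to show that the set $B$ of phases where $\mathcal{O}_\gamma$ is not locally constant is contained in a closed $\cQ$-null set.

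I would split the candidates for $B$ into three classes.
\begin{itemize}
\item \emph{Label wrap-around phases.} These are the boundary lines $\bar\theta\in\{0\}$ and $\bar\omega\in\{0\}$ of the fundamental domain, which are closed and null.
\item \emph{Phases where the LOS graph $\mathcal{C}$ changes.} Here a satellite lies on the boundary of a visibility cap, or a satellite--satellite segment is tangent to the Earth ball. Each such condition is the zero set of a nonconstant real-analytic function of $(\bar\theta,\bar\omega)$, since positions depend analytically on the phase through \eqref{eq:elements1}--\eqref{eq:elements2}. Finitely many such zero sets form a closed null set.
\item \emph{Phases where the route changes with $\mathcal{C}$ fixed.} These are exactly the places where the rule's selection criterion switches, that is, a tie between competing candidate routes or candidate next hops. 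For shortest paths, two path lengths coincide; for greedy routing, two central angles coincide or a central angle equals the current one. By hypothesis these ties are confined to a closed $\cQ$-null set.
\end{itemize}
Off the union of these three sets, all edges, weights and strict comparisons are locally stable. Since $\Gamma$ is deterministic and time-invariant, the route is then locally constant. The union of three closed null sets is closed and null, so the closure of the discontinuity set is null.

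I expect the main obstacle to be the third class for a general admissible $\Gamma$. Measurability alone does not force local constancy away from ties, so I would read the hypothesis ``routing ties confined to a closed $\cQ$-null set'' as saying that the discontinuity set of $\mathcal{O}_\gamma$ lies in a closed null set. I would then argue this reading explicitly for $\Gamma_{\mathrm{sp}}$ and for the greedy rule via the strict-inequality stability argument above. The analytic nonconstancy needed in the second class is routine.
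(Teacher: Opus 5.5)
Your proposal follows the paper's proof: you apply statement (3) of Prop.~\ref{prop:transfer} to $h\circ\Gamma\circ\Xi^{\mathrm{lab}}$, and you cover its discontinuity set by three sets. These are the tie phases (closed and $\cQ$-null by hypothesis), the visibility-boundary phases, and the boundary of the fundamental domain $S$, the last two being finite unions of closed curves, hence closed and null. Your remark that measurability alone does not force the route to be locally constant away from these sets is fair. The paper's proof relies on the same reading of the tie hypothesis, without saying so.

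One step needs a small repair. You assert that $f$ takes finitely many values, so that it is continuous wherever the label sequence $\mathcal{O}_\gamma$ is locally constant. This is true for hop count or the number of ground relays. It is false for the end-to-end length, which the statement lists explicitly and which is used later as $h(\gamma_t)=2\,\mathrm{len}(\gamma_t)/c$. The length depends on the phase, not only on the label sequence in $\Routes$.

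For such functionals you need one more observation. Take an open set where the label sequence is constant and which avoids the boundary of $S$. On it, the length is a finite sum of Euclidean distances between positions that depend continuously (indeed analytically) on the phase through \eqref{eq:elements1}--\eqref{eq:elements2}, so it is continuous there. The discontinuity set is therefore still contained in the same three closed null sets. This is why the paper phrases the key step as containment of the discontinuity set rather than local constancy of the route.
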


\begin{IEEEproof}
The map $h\circ\Gamma\circ\Xi^{\mathrm{lab}}$ is bounded, and its
discontinuity phases are contained in the union of the tie phases, the
visibility-boundary phases (where an edge of $\mathcal{C}_t$ appears or
disappears), and the {boundaries} of the fundamental domain $S$ (where
the labels are permuted). The last two sets are finite unions of closed
smooth curves, and are therefore closed and $\cQ$-null. The first is
closed and $\cQ$-null by assumption. The discontinuity set is thus
contained in a closed $\cQ$-null set, so its closure is itself
$\cQ$-null. Statement (3) of Prop.~\ref{prop:transfer} therefore applies.
\end{IEEEproof}

\begin{remark}
\label{rem:lebesgue}
As in \cite[Rem.~2]{11159552}, the dichotomy has practical
implications. Since the orbital altitude determines the satellite angular
speed $\bar v_\omega$, a designer can, in principle, select an altitude
that makes the speed ratio $\rho$ rational and hence the routing table
periodic.
However, a designer who selects the altitude as a real number in a continuum of
feasible values obtains an irrational ratio for almost every choice
with respect to the Lebesgue measure, since the rational numbers have
Lebesgue measure zero.
Therefore, one can claim that designers unaware of this dichotomy have
almost surely built constellations whose routing tables are non-periodic
(Thm.~\ref{thm:aper}) and whose long-run route statistics are given by
the ensemble averages of Cor.~\ref{cor:stats}. The periodic regime of
Cor.~\ref{cor:exact} arises only from a deliberate resonant design
choice, as in Sec.~\ref{sec:num}.
Note that more generally, the model used is an idealized one with spherical orbits and
an isotropic gravitational field. In reality, the latter is slightly
anisotropic and orbital parameters are slightly adapted by the operator over time.
This means that the periodic model can be practically ruled out for this extra reason as well.
In other words, periodic models should {\em not} be adopted as a default option for this class of problems.
\end{remark}

\begin{remark}[Zero entropy]
All the processes considered in this paper
have zero entropy. It is well known that the entropy of 
$\{R_t\}$ is 0 both in the rational
and the irrational cases. Since the entropy of
a factor never exceeds that of the original system,
the routing-table process has zero entropy too, as well as any other
factor of $\{R_t\}$. The lack of periodicity of Thm.~\ref{thm:aper}
should therefore not be interpreted as randomness: among
non-periodic processes, the routing table belongs to the
zero-entropy class, in which the future is fully determined by the
current phase and no information is produced at a positive rate.
This is in agreement with the line spectrum measured in
Sec.~\ref{sec:spec}, since a process with positive entropy cannot
have a pure line spectrum.
\end{remark}
\section{Numerical Validation}
\label{sec:num}
The aim of this section is to complement the dynamical system
		analysis of the routing tables by numerical evidence. The section
		is organized as follows. We first discuss the setting which is chosen for
		simulating the dynamics of the routing tables of a Walker constellation.
		We then illustrate the convergence of time averages to ensemble averages w.r.t.
		the invariant measure. We then illustrate the lack of periodicity of the routing
		table in the irrational case. Finally, we use classical spectral analysis to confirm 
		the main result of the paper. The basic tool for this is a classical test
		on the Fourier transform of a signal allowing one to distinguish periodicity 
		from non-periodicity in this setting. We finally use the
	        ensemble averages to define a round-trip delay metric
	        between two cities that is oblivious of delay fluctuations and compare
		it with its terrestrial counterpart.

\subsection{Setting}
We simulate a Walker constellation with $N_o=24$ orbits, $N_s=15$
satellites per orbit, inclination $\varphi=53^\circ$, and orbital radius
$r=6921$~km (altitude $550$~km, orbital period $\approx95.6$~min), over
an Earth of radius $e=6371$~km spinning at the sidereal rate, so that
$\rho$ falls in the irrational regime of Remark~\ref{rem:lebesgue}.
Remark~\ref{rem:numirr} below makes precise the sense in which a
finite-precision simulation represents this regime. Six gateways are 
placed on three continents, the source terminal near 
$(37.5^\circ\mathrm{N},127^\circ\mathrm{E})$ and the
destination near $(34.6^\circ\mathrm{S},58.4^\circ\mathrm{W})$,
a nearly antipodal pair, at a great-circle distance of $19{,}432$~km. The
ground--satellite elevation mask is $\varepsilon=15^\circ$. Throughout 
the section, we concentrate on the greedy rule of Sec.~\ref{sec:model}:
at each step, the current node forwards to the LOS node of $\Sat\cup\Gw$
nearest to the destination in central angle among those strictly reducing it,
with a deterministic tie-break. The route ends when the current satellite sees the
destination. In our simulations, the greedy rule returned 
a route at every sampled phase, so the outage route never arises here.
Fig.~\ref{fig:route} shows one route snapshot.

\begin{figure}
	\centering
	\includegraphics[width=1\linewidth]{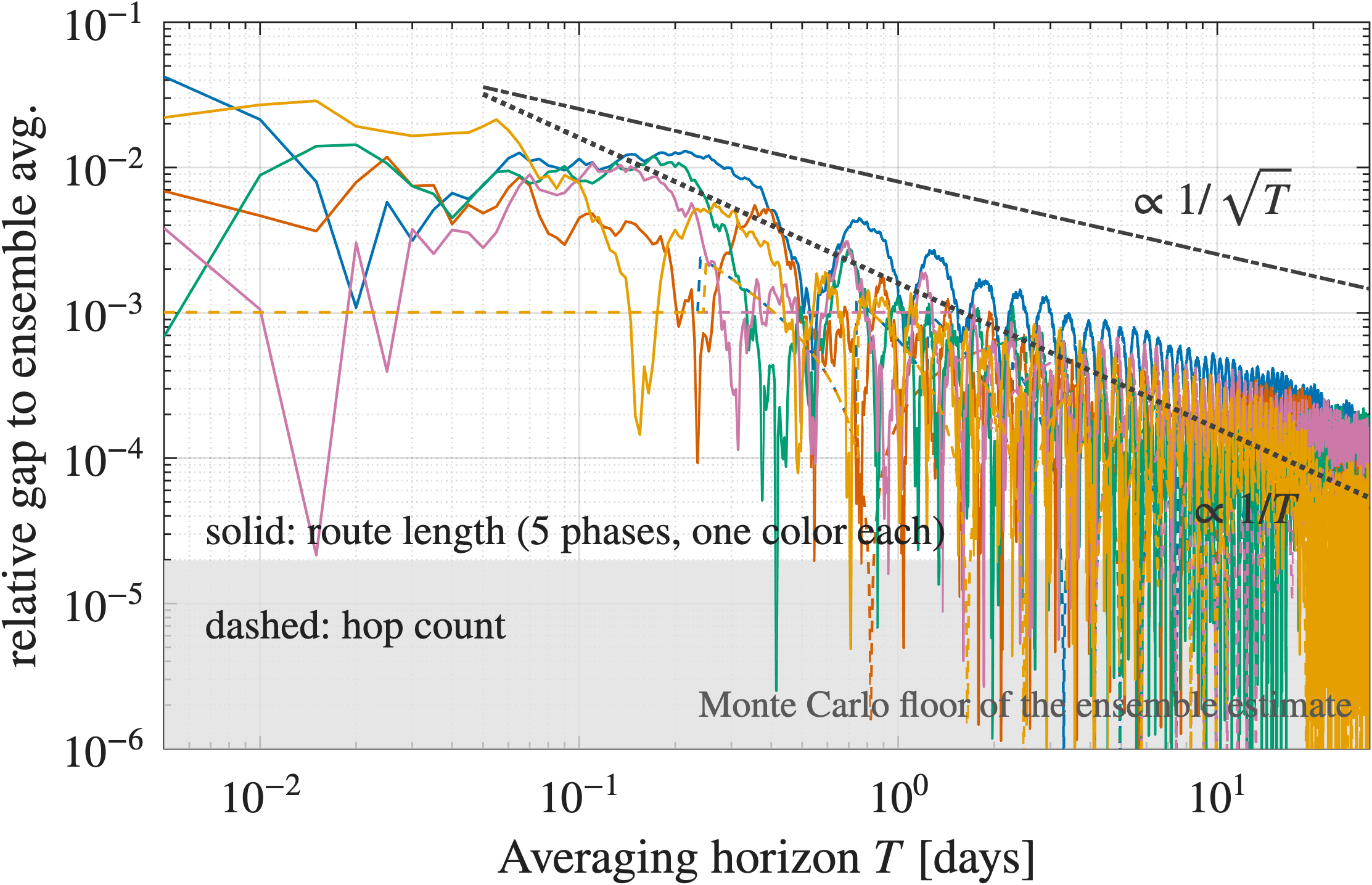}
	\caption{Irrational configuration ($550$~km altitude).
		Relative gap between the running time averages of five
		initial phases drawn at random from $\cQ$ (one color per phase; solid: route
		length, dashed: hop count) and the ensemble averages over $10^{6}$
		independent phases ($21{,}963$~km and $6.006$ hops). The shaded band
		is the Monte Carlo standard error of the ensemble estimate. After 30 days the
		gap is below $1.6\times10^{-4}$ for every phase and both functionals,
		as predicted by Cor.~\ref{cor:stats} for every initial phase, and the
		decay lies between the $1/\sqrt{T}$ and $1/T$ guides.}
	\label{fig:ergodic}
\end{figure}

\begin{remark}[Irrationality in finite-precision simulation]
	\label{rem:numirr}
	No finite simulation can realize an irrational ratio, since every
	finite-precision parameter is rational. With the decimal values
	selected above (orbital period $95.6$~min, sidereal day $86{,}164.1$~s), the
	ratio $\bar v_\omega/\bar v_\theta=861641/57360$ is rational even in
	exact arithmetic, and the simulated flow on $S$ is periodic with
	minimal period of exactly $3824$ sidereal days, about $10.5$ years
	(binary floating point perturbs the ratio again at relative precision
	$10^{-16}$, which lengthens the period further). The longest horizon in
	this section is $32$ days, used in the spectral comparison below.
	This is shorter than the period by two orders of magnitude. On the
	tested windows, the simulated trajectory is indistinguishable from
	that of an irrational flow: it
	recurs at no tested lag, and its time averages converge to the
	ensemble averages. The theorems concern the real-valued rates of a
	deployed constellation, which are irrational for almost every design
	(Remark~\ref{rem:lebesgue}). The impossibility of certifying
	irrationality from a finite record is also why the non-periodicity of
	Thm.~\ref{thm:aper} requires a proof: no finite observation window
	distinguishes an irrational ratio from a rational ratio with a
	sufficiently long period.
\end{remark}

\subsection{Ensemble and Time Averages}

Fig.~\ref{fig:ergodic} illustrates Cor.~\ref{cor:stats}. Five
initial phases are drawn at random from $\cQ$, and for each of them
the running time averages of route length and hop count converge to
the ensemble averages computed over $10^{6}$ i.i.d.\ samples
($21{,}963$~km and $6.006$ hops, estimated with a Monte Carlo
standard error of $2\times10^{-5}$, the shaded band of the figure). After 30 days,
the relative gap is below $1.6\times10^{-4}$ for every initial phase and
for both functionals. The decay is polynomial, between the
$1/\sqrt{T}$ and $1/T$ guides. This range is consistent with the
equidistribution theory of linear flows on the torus: the gap
between the time and the ensemble averages is controlled by the
discrepancy of the orbit through Koksma--Hlawka-type inequalities:
for almost every ratio $\rho$ the discrepancy decays as
$T^{-1}$, up to logarithmic factors \cite{kuipers1974}. The
$T^{-1/2}$ guide (which is the rate that averages of independent samples
would give under second moment assumptions) serves here as a conservative envelope.
Convergence for \emph{every}
phase---not merely almost every---is precisely the content of the
unique-ergodicity refinement in statement (3) of Prop.~\ref{prop:transfer}.

\subsection{An Ergodic Theory Based Round-Trip Delay Metric}
\label{sec:rtt}

This subsection is focused on a user centric notion which is 
round-trip propagation delay between two distant cities. The source is a
user located in one city and the destination is a ground station located in the other one.
The subsection is focused on the irrational case. The
metric proposed here is the long-run average of this delay,
which then coincides with the ensemble average.
We illustrate this metric for a given routing rule and constellation
and compare it with terrestrial propagation.

The source is near Seoul and the destination near
Paris. They are separated by a great-circle distance of $8{,}973$~km.
We use the greedy rule and the irrational constellation configuration
defined above, which we simulate for $30$ days. The observable is
the round-trip propagation delay, $h(\gamma_t)=2\,\mathrm{len}(\gamma_t)/c$,
where $\mathrm{len}(\gamma_t)$ denotes the end-to-end greedy route length at
time $t$ and $c$ denotes the speed of light in vacuum. We exclude
here processing and queueing delays from both the satellite and
terrestrial propagation calculations.

The top panel of Fig.~\ref{fig:rtt} displays the instantaneous delay.
This delay varies between
$68.3$ and $79.9$~ms as the constellation configuration changes.
For this fixed pair of cities, since the instantaneous delay varies 
over time, the proposed metric is the ensemble average of the
round-trip propagation delay under $\cQ$.
By Cor.~\ref{cor:stats}, unique ergodicity ensures that for every
initial phase and observation starting time, the time average converges to 
this ensemble average. In the bottom panel, we see that the time average converges to 
$72.1$~ms. The ensemble average is also evaluated over $10^{5}$
independent phases and is also $72.1$~ms, with Monte Carlo standard
error $0.005$~ms.

We now compare this constellation metric with its terrestrial
counterpart. Light in fiber propagates at $c/1.47$ \cite{handley2018}, 
so an ideal fiber path along the great circle would have a round-trip
propagation delay of $88.0$~ms. The satellite metric of approximately
$72$~ms is already below this lower bound, in agreement with the free-space argument
of \cite{handley2018}. Deployed fiber cannot follow the great circle.
For reference, the measured Internet round-trip time between Seoul
and Paris is about $247$~ms \cite{wondernetwork}. Thus, the proposed
metric allows us to compare satellite and terrestrial propagation
through two well and uniquely defined numbers.

\begin{figure}
\centering
\includegraphics[width=1\linewidth]{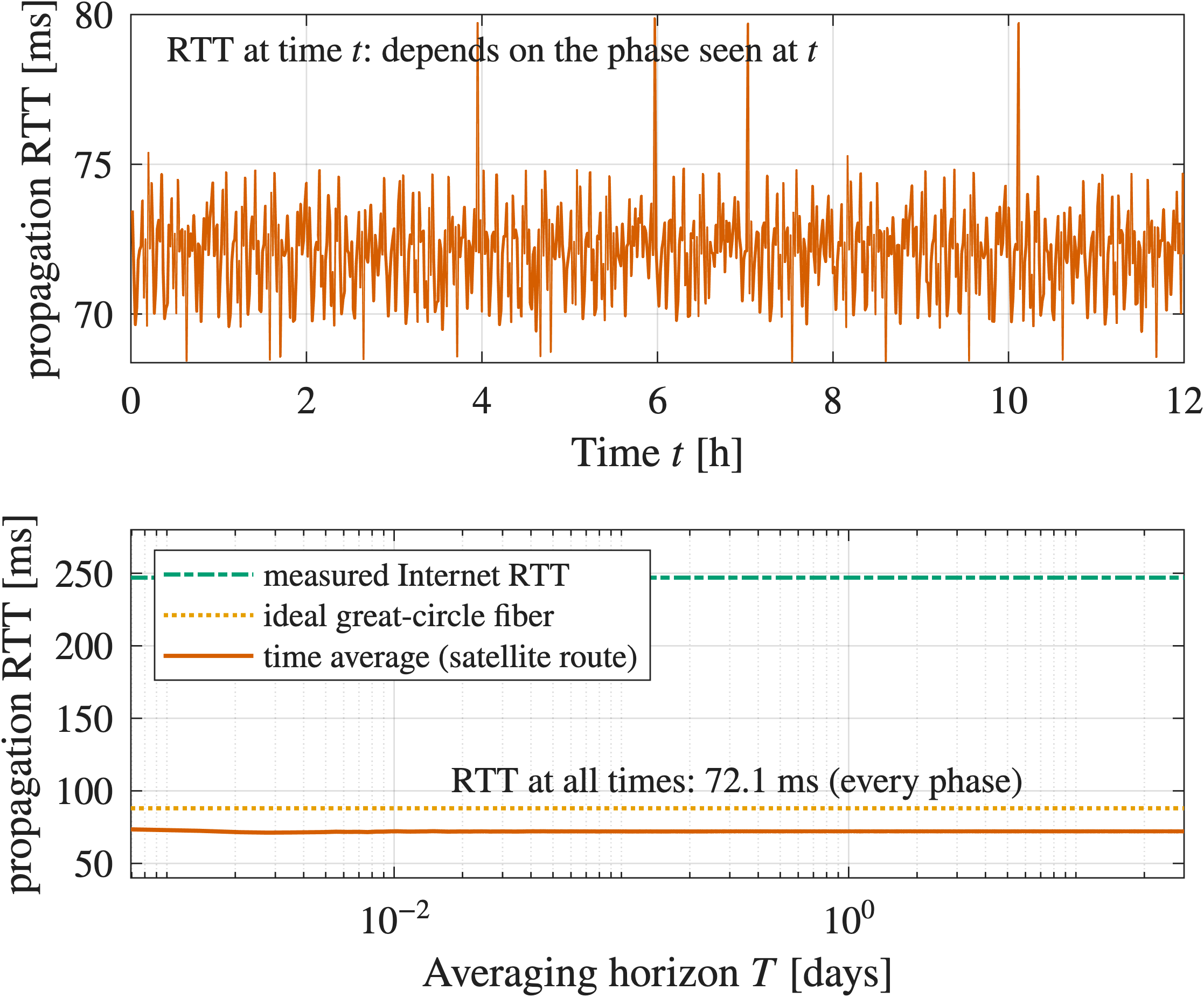}
\caption{Round-trip propagation delay of the greedy route from
Seoul to Paris in the irrational configuration. Top: the delay at
time $t$ varies with the constellation phase between $68.3$ and
$79.9$~ms. Bottom: the time average converges to $72.1$~ms,
the common long-run average for every initial phase and observation
starting time. This value is below the
$88.0$~ms round-trip propagation delay of an ideal great-circle (geodesic)
fiber path and way below the $247$~ms measured Internet round-trip time
\cite{wondernetwork}.}
\label{fig:rtt}
\end{figure}

\subsection{Periodicity--Lack of Periodicity of the Look Angles}

\begin{figure}
	\centering
	\includegraphics[width=1\linewidth]{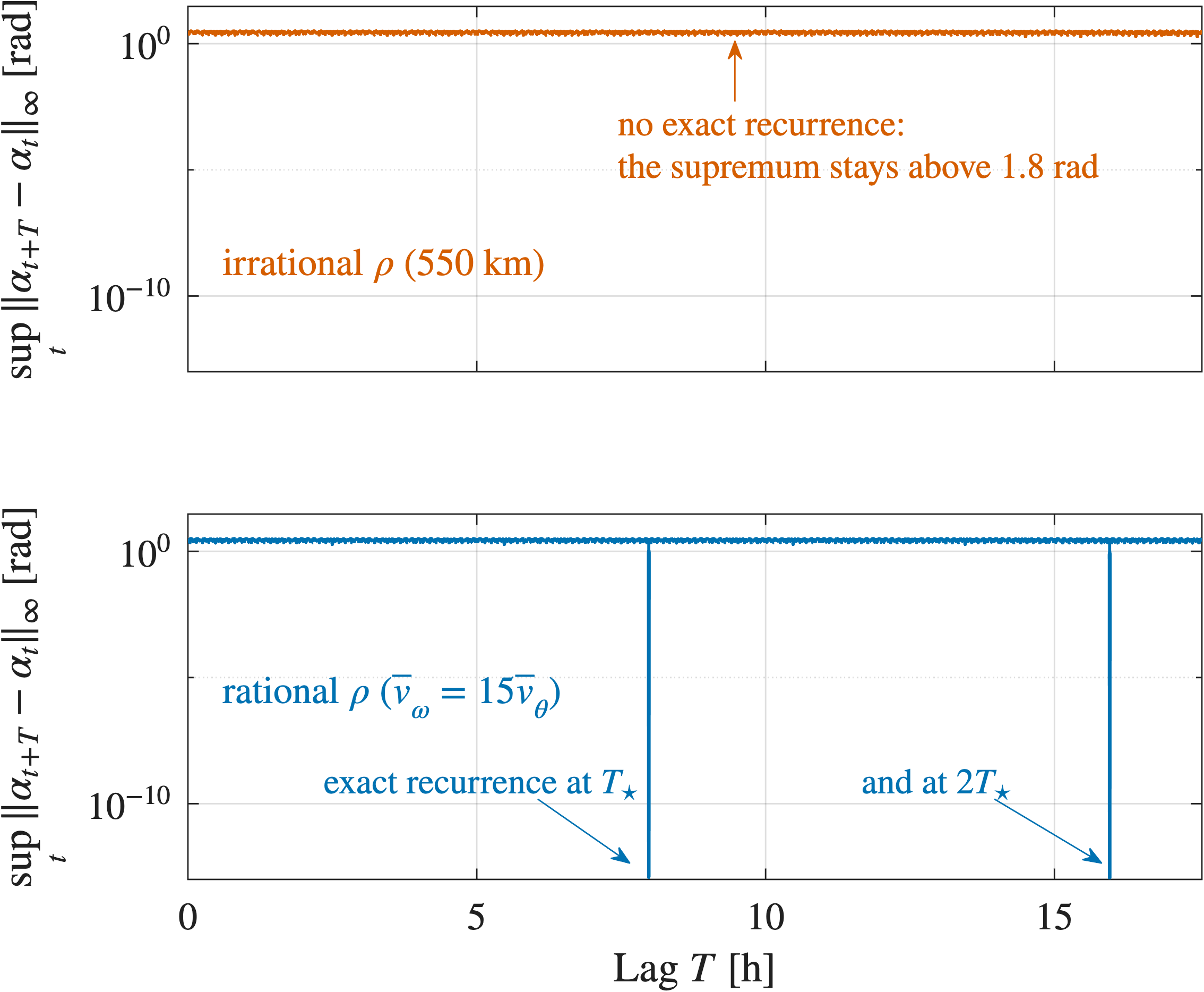}
	\caption{The dichotomy at the routing-table level: the lag
		functional $\sup_t\|\alpha_{t+T}-\alpha_t\|_\infty$ of the
		source's first-hop look angles, in the irrational (top) and the
		rational (bottom) configurations. The horizontal axis is the lag
		$T$, not time. At a given $T$, the supremum is taken over 
		$t$, so each curve scans which lags are periods of the entry.
		The supremum vanishes exactly at $T_\star\approx7.98$~h and its
		multiples in the rational case (Cor.~\ref{cor:exact}) and stays
		above $1.8$~rad at every tested lag in the irrational case
		(Thm.~\ref{thm:aper}).}
\label{fig:dichotomy}
\end{figure}

Fig.~\ref{fig:dichotomy} illustrates both sides of the dichotomy on the
routing table itself. For the rational side, we set
$\bar v_\omega=15\bar v_\theta$: each satellite then completes
exactly fifteen orbits per sidereal day, so that its ground track
closes after exactly one day (constellations with this property are known as
repeat-ground-track designs). The orbital period
($\approx95.7$~min) is nearly identical to that of
the $550$-km configuration. We recall that
$T_\star$ denotes the minimal period of the flow in the rational case.
Here $T_\star=\mathrm{day}/\gcd(N_o,15N_s)$, one third of the sidereal
day. The observable is the source's first table entry, the look
angles of which are collected in the pair
$\alpha_t:=\alpha(g_s,X_t)\in[-\pi,\pi)\times[0,\pi/2]$, where
$g_s$ denotes the source terminal, $X_t$ the position of the
first-hop satellite of its route at time $t$, and $\alpha(\cdot,\cdot)$
the look angles defined in Sec.~\ref{sec:model}. The pair is sampled
every $T_\star/2400\approx12$~s over $2.5\,T_\star$.

For each lag $T$, the figure reports the supremum:
$\sup_t m_T(t)$ of the angle mismatch process
$m_T(t)=\|\alpha_{t+T}-\alpha_t\|_\infty$, in which the azimuth
difference is taken modulo $2\pi$ in $(-\pi,\pi]$, so that the
mismatch of the two angle pairs is measured along the shorter arc.
The supremum is the period detector: $\sup_t m_T(t)=0$ holds if and
only if $T$ is a period of the entry. For a smooth observable, this
functional degrades gracefully near a period: for instance,
$\sup_t|\sin(t+T)-\sin t|=2|\sin(T/2)|$ vanishes at the lags
$T\in2\pi\mathbb{Z}$ and is small near them. However, $m_T(t)$ is instead
discontinuous, and its sup functional behaves in an all-or-nothing way: a lag close
to a period produces no small value of the supremum, which explains the
needle shape of the curves in the periodic case.

In the rational case, the supremum
vanishes (below $10^{-12}$~rad, i.e., at numerical precision) exactly
at the lags $T_\star$ and $2T_\star$, and the sequence of labels on
the route recurs with minimal period $T_\star$ as well. We recall that labels
here name \emph{lattice sites}, as in \eqref{eq:lab}. With marks (physical
vehicle labels) the route period is a multiple of $T_\star$ (Remark~\ref{rem:marks}).

In the irrational configuration, no lag brings the supremum
below $1.8$~rad, which illustrates
the statement of Thm.~\ref{thm:aper} on the tested window.
That the supremum admits a lag-independent positive lower bound
can be explained by the lack of continuity of the angle of interest.
Since the flow is a translation, the displacement $R_{t+T}x-R_tx$ is a constant vector $d(T)$, and
$d(T)\neq0$ at every lag that is not a period of the flow. The map
from phase to look angles is smooth as long as the first-hop satellite is
unchanged and jumps when the entry hands over to another satellite.
The largest such jump is of order one radian, since only a few
satellites are visible at a time and their sightlines are well
separated. For any $d(T)\neq0$, a strip of phases lies on one side of
a handover curve while its translate by $d(T)$ lies on the other
side. The orbit is dense (Lemma~\ref{lem:cb}) and enters this strip,
and one such time suffices for the supremum to register a full jump.

\subsection{Spectral Validation}
\label{sec:spec}

The purpose of this subsection is to determine whether the spectral
lines of a routing-table entry distinguish the rational and
irrational configurations over a finite observation window. 
More precisely, we test whether the dominant frequencies fit
integer multiples of a common fundamental. We also examine whether
this fundamental recovers the constellation period $T_\star$ in
the rational configuration. This provides a frequency-domain
comparison with the recurrence test in Fig.~\ref{fig:dichotomy}.

We analyze the elevation of the source's first-hop entry in each
configuration. We sample $8$ sidereal days at intervals of $10$~s
and estimate the spectrum using a Hann window
\cite{harris1978}.\footnote{These three choices are matched to the
constellation at hand: the $8$-day record gives frequency
resolution $1/(8~\mathrm{days})\approx1.45~\mu$Hz, small enough to
separate lines spaced by $f_0\approx34.8~\mu$Hz; the $10$-s
sampling places the Nyquist limit at $50$~mHz, so that the lines
folded across it by the jumps of the entry land outside the
$16$-mHz retained band, while slower sampling folds them into it;
and the Hann window suppresses the leakage of each strong line
into its neighbors, which would otherwise corrupt the extraction
of the $16$ dominant peaks.} We extract the $16$ dominant peaks below
$16$~mHz and refine their frequencies by local parabolic
interpolation. The estimated frequency uncertainty is approximately
$50$~nHz. For measured peak frequencies $\nu_i$, we test candidates
$f_0=\nu_1/m$, where $m$ is a positive integer. We compare the deviations of $\nu_i/f_0$ from integers over
records of $8$, $16$, and $32$ days. If no candidate
remains consistent within the tested range, we fit the peaks to
integer combinations of two frequencies estimated from the data.
We then search for an integer relation between these frequencies
with bounded coefficients, as in integer-relation detection methods
such as PSLQ \cite{pslq}.

In the rational configuration, all $16$ peaks fit integer multiples
of $f_0=34.8173$~$\mu$Hz. The recovered period is
$1/f_0=7.978$~h, which agrees with $T_\star$. The maximum
deviations from integer multiples are $2.3\times10^{-4}$,
$0.7\times10^{-4}$, and $0.2\times10^{-4}$ for records of $8$,
$16$, and $32$ days, respectively. In the irrational configuration,
the corresponding deviations are $3.0\times10^{-3}$,
$1.3\times10^{-3}$, and $1.4\times10^{-3}$. The fit with two
frequencies represents the retained peaks within the estimated
uncertainty. We find no integer relation between these frequencies
with coefficients of magnitude at most $74$ at the achieved
resolution. Note that the spectra of Fig.~\ref{fig:spectrum} nevertheless look similar.

We interpret these results through the torus flow. A
square-integrable scalar observable has spectral lines in
$\{k_1F_1+k_2F_2:\,k_1,k_2\in\mathbb{Z}\}$, where
$F_1=N_o\bar v_\theta/(2\pi)$ and
$F_2=N_s\bar v_\omega/(2\pi)$ are the rates at which the two
phase coordinates traverse the torus. In the rational case, these
frequencies are integer multiples of $1/T_\star$, so the measured
deviations are pure measurement error and shrink as the record
grows. In the irrational case, $F_1$ and $F_2$ are incommensurate,
so the deviations saturate: no record length can make the peaks fit
a single fundamental. However, an observable may retain only some of
the frequencies of the flow. Periodicity of one scalar entry
therefore does not establish periodicity of the full table.
Moreover, finite records cannot exclude a rational ratio with a
sufficiently long period (Remark~\ref{rem:numirr}). Our conclusions
are limited to the retained peaks, the tested coefficient range,
and the achieved frequency resolution.

\begin{figure}
\centering
\includegraphics[width=1\linewidth]{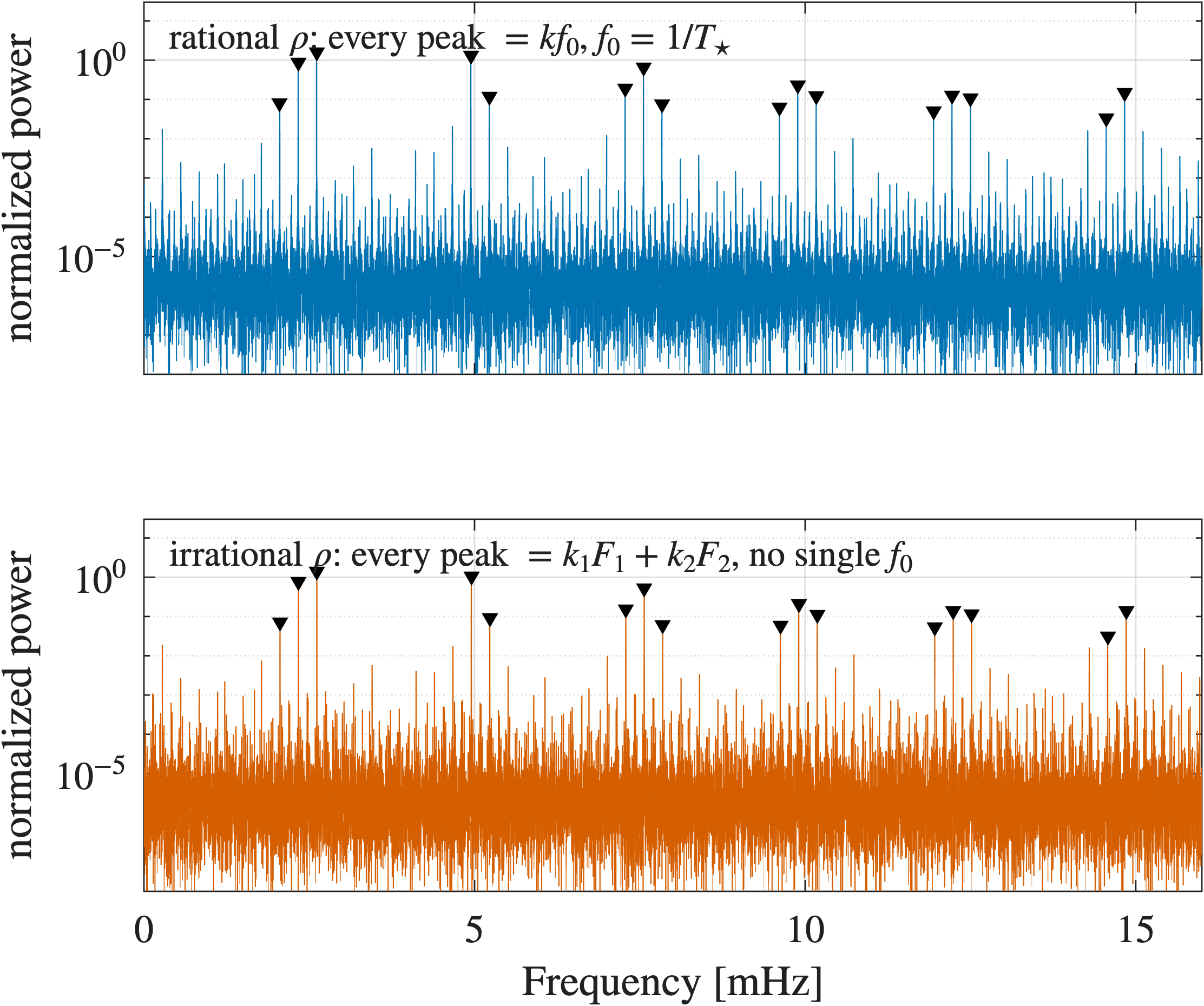}
\caption{Power spectra of the elevation of the source's first-hop
	entry over $8$ sidereal days, in the rational (top) and the
	irrational (bottom) configurations. The markers indicate the $16$
	dominant peaks. In the rational case every peak is an integer
	multiple of $f_0=1/T_\star$ (harmonic numbers $59$, $67$, $75$,
	\dots). In the irrational configuration, the retained peaks fit
	$k_1F_1+k_2F_2$, but no common fundamental is supported within
	the tested range and precision. The two spectra are visually
	alike. We distinguish them by testing the frequency relations.}
\label{fig:spectrum}
\end{figure}

\section{Conclusion}
Under natural routing rules based on the constellation geometry, 
the routing table over a Walker constellation 
(possibly leveraging static ground relays) is a
conjugate of a linear torus flow. 
A consequence is that the routing table of such a constellation
inherits the periodic -- non periodic dichotomy of this flow, based
on the rationality -- irrationality of the ratio of Earth and Satellite rotation speeds.
In the rational case, the routing table has the same
period as the flow. In the irrational case, it is non periodic.
In the last case, long-run route statistics can however be evaluated as
ensemble averages w.r.t. the invariant measure of the dynamics and provide a natural
way to define ``universal" metrics (i.e., metrics that are oblivious
of temporal fluctuations and initial conditions).

This opens several new lines of research.
The first one concerns the practical consequences of the last result on the
information to be stored on and sent to routers (in particular satellites). Under a resonant design,
routes and beamforming schedules can be computed once and for all, and replayed
indefinitely. However, in the non periodic case, these tables never repeat
and any finite predictive set of routing instructions computed at a given time is only valid for a bounded
duration. In this case, it is not clear how to select the frequency at which this predictive data must be recomputed 
or equivalently the amount of information to be broadcasted to all routers at each such broadcasting event.
The second one concerns a more precise study of the dynamics of the routing table,
for instance the determination of the frequency of discontinuities of a given entry
(in particular the frequency of handovers on the route).
The third one concerns the extension to constellations beyond single altitude Walker (e.g. multi-altitude
or a mixture of delta and star Walker).
\section*{Acknowledgment}
The authors thank
Pierre-Antoine Guihéneuf for his feedback on the 
topics discussed in the present paper. The work of Chang-Sik Choi was supported by NRF RS 2024-00334240. 
The work of F. Baccelli was supported by the
Horizon Europe INSTINCT project (grant SNS 101139161), the France 2030
projects PEPR reseaux du Futur project (grant ANR-22-PEFT-0010), and by
5G NTN mmWave (BPIFrance).
The two authors are also jointly supported by the France-Korea Hubert Curien program.
\bibliographystyle{IEEEtran}
\bibliography{ref2}

\end{document}